\documentclass[11pt]{article}
\usepackage[margin=0.8in]{geometry}

\usepackage{varioref}
\usepackage[usenames,svgnames,xcdraw,table]{xcolor}
\definecolor{DarkBlue}{rgb}{0.1,0.1,0.5}
\definecolor{DarkGreen}{rgb}{0.1,0.5,0.1}

\usepackage{nicefrac}
\usepackage{cancel}

\usepackage{hyperref}
\hypersetup{
   colorlinks   = true,
 linkcolor    = DarkBlue, 
 urlcolor     = DarkBlue, 
	 citecolor    = DarkGreen 
}

\usepackage{appendix}

\usepackage{mathtools}
\usepackage{amsthm}
\usepackage{algorithmic}
\usepackage{algorithm}
\usepackage{array}

\usepackage[font={small,it}]{caption}

\usepackage{graphicx,epsfig,amsmath,latexsym,amssymb,verbatim}

\usepackage[square,semicolon,numbers]{natbib}

\newcommand{\extra}[1]{}


\usepackage{amsmath,amssymb,amsfonts}
\usepackage{amsthm}
\usepackage{thmtools,thm-restate}
\usepackage{enumitem}

\newtheorem{theorem}{Theorem}
\newtheorem{corollary}[theorem]{Corollary}
\newtheorem{definition}{Definition}
\newtheorem{lemma}[theorem]{Lemma}
\newtheorem{proposition}{Proposition}
\newtheorem{claim}{Claim}



\def\squareforqed{\hbox{\rlap{$\sqcap$}$\sqcup$}}
\def\qed{\ifmmode\squareforqed\else{\unskip\nobreak\hfil
\penalty50\hskip1em\null\nobreak\hfil\squareforqed
\parfillskip=0pt\finalhyphendemerits=0\endgraf}\fi}
\def\endenv{\ifmmode\;\else{\unskip\nobreak\hfil
\penalty50\hskip1em\null\nobreak\hfil\;
\parfillskip=0pt\finalhyphendemerits=0\endgraf}\fi}

\renewenvironment{proof}{\noindent \textbf{{Proof~} }}{\qed\medskip}
\newenvironment{proof+}[1]{\noindent \textbf{{Proof #1~} }}{\qed\medskip}

\mathchardef\ordinarycolon\mathcode`\:
\mathcode`\:=\string"8000
\def\vcentcolon{\mathrel{\mathop\ordinarycolon}}
\begingroup \catcode`\:=\active
  \lowercase{\endgroup
  \let :\vcentcolon
  }

\DeclareMathOperator*{\argmin}{arg\,min}

\usepackage{physics}
\usepackage[colorinlistoftodos,prependcaption,textsize=tiny]{todonotes}

\usepackage{thm-restate,amsthm,thmtools}
\usepackage{titlesec}
\usepackage{fancyhdr}
\usepackage{verbatim}
\usepackage{tabularx}

\newcommand{\MMS}{\mathrm{MMS}}

\newcommand{\Alloc}{\mathcal{A}}
\newcommand{\AllocB}{\mathcal{B}}
\newcommand{\AllocX}{\mathcal{X}}

\newcommand{\PO}{\mathrm{PO}}

\newcommand{\EFone}{\mathrm{EF}1}

\newcommand{\EFX}{\mathrm{EFX}}
\newcommand{\EFkX}{\mathrm{EFkX}}
\newcommand{\SC}{\mathrm{SC}}
\newcommand{\umat}{\widehat{\mathcal{M}}}
\newcommand{\mat}{\mathcal{M}}

\newcommand\numberthis{\addtocounter{equation}{1}\tag{\theequation}}

\title{\bfseries Fair Chore Division under Binary Supermodular Costs}

\author{Siddharth Barman\thanks{Indian Institute of Science. {\tt barman@iisc.ac.in}} \and Vishnu V. Narayan\thanks{Tel Aviv University. {\tt narayanv@tauex.tau.ac.il}} \and Paritosh Verma\thanks{Purdue University. {\tt paritoshverma97@gmail.com}}}

\date{}

\begin{document}

\maketitle

\begin{abstract}
We study the problem of dividing indivisible chores among agents whose costs (for the chores) are supermodular set functions with binary marginals. Such functions capture complementarity among chores, i.e., they constitute an expressive class wherein the marginal disutility of each chore is either one or zero, and the marginals increase with respect to supersets. In this setting, we study the broad landscape of finding fair and efficient chore allocations. In particular, we establish the existence of $(i)$ $\EFone$ and Pareto efficient chore allocations, $(ii)$ $\MMS$-fair and Pareto efficient allocations, and $(iii)$ Lorenz dominating chore allocations. Furthermore, we develop polynomial-time algorithms---in the value oracle model---for computing the chore allocations for each of these fairness and efficiency criteria. Complementing these existential and algorithmic results, we show that in this chore division setting, the aforementioned fairness notions, namely $\EFone$, $\MMS$, and Lorenz domination are incomparable: an allocation that satisfies any one of these notions does not necessarily satisfy the others.

Additionally, we study $\EFX$ chore division. In contrast to the above-mentioned positive results, we show that, for binary supermodular costs, Pareto efficient allocations that are even approximately $\EFX$ do not exist, for \emph{any} arbitrarily small approximation constant. Focusing on EFX fairness alone, when the cost functions are identical we present an algorithm (Add-and-Fix) that computes an EFX allocation. For binary marginals, we show that Add-and-Fix runs in polynomial time.
\end{abstract}

\section{Introduction}

The question of dividing indivisible items among a set of agents in a fair manner is a pervasive problem in many domains. Popular notions of fairness in the field of discrete fair division include envy-freeness up to one good ($\EFone$), envy-freeness up to any good ($\EFX$), and the {maximin share guarantee} ($\MMS$). Both existential and algorithmic guarantees for these and related fairness notions have been extensively studied in recent years; see e.g., \cite{aziz2022algorithmic,endriss2018lecture}. 

In this research direction, a majority of results focus on the fair division of \textit{goods}, which correspond to items that, when allocated, induce non-negative values among the agents. Notably, the complementary settings of fair division of \emph{chores} (which model negatively valued items or tasks) are relatively under-explored. While the definitions of familiar fairness criteria (such as $\EFone$ and $\MMS$) extend quite directly, the conditions under which a fair chore division exists do not mirror the goods' case. 

In fact, important known results for the goods setting do not directly extend to the chores setup. For example, the influential work of Caragiannis et al.~\cite{caragiannis2019unreasonable} studies the fair division of goods when the agents have additive valuations and establishes that, in this context, there always exists an allocation that is both $\EFone$ and Pareto efficient. In particular, they show that an allocation of goods that maximizes Nash welfare among the agents achieves these fairness and efficiency goals. By contrast, in the context of chores, the existence of allocations that are simultaneously $\EFone$ and Pareto efficient has remained a challenging open problem. Such allocations have only recently been shown to exist for a specific subclass of valuation functions, namely \textit{bivalued} additive valuations~\cite{ebadian2022,garg2022fair}. Consequently, it is clear that the fair division of chores presents a new set of technical challenges, and the study of chore division is an important thread of research in discrete fair division.

We contribute to the recent literature on chore division by focusing on settings wherein the agents' costs (disutilities) for the chores have \textit{binary marginals}. Specifically, an agent $i$'s cost function, $c_i$, is said to have binary marginals (equivalently, is said to be dichotomous) if the marginal value of the chore $t$ relative to any subset $S$ is either zero or one, i.e., $c_i(S \cup \{t \}) - c_i(S) \in \{0,1\}$. In the complementary context of goods, an abundance of papers consider agents with dichotomous valuations (e.g.~\cite{bogomolnaia2005collective,kurokawa2018leximin}), since such valuations model agent preferences in several real-world settings, such as kidney exchanges \cite{roth2005pairwise} and housing allocations \cite{benabbou2020finding}. 

The majority of our results focus on the case where the agents have \textit{supermodular} cost functions. Supermodular functions have received considerable attention in the economics literature. Notably, the use of supermodularity to express complementarity in agents' preferences dates back to the works of Edgeworth, Pareto, and Fisher~\cite{samuelson1948foundations}.\footnote{While we focus on chore division under supermodular costs, prior works have complementarily addressed fair division of goods with supermodular valuations; see, e.g.,~\cite{caragiannis2019unreasonable,benabbou2020finding}.} Specifically, an agent $i$'s cost function $c_i$ (i.e., disutilities for the chores) is said to be supermodular iff it bears the increasing marginals property: $c_i(T  \cup \{a \}) - c_i(T) \geq c_i(S  \cup \{a \}) - c_i(S)$, for all subsets $S \subseteq T$ and all chores $a \notin T$. Increasing marginals are a well-suited assumption for chores, since taking on a new task is increasingly likely to raise one's cost due to the burdens of multitasking and frequent task-switching. Binary supermodular functions can be used to model, for instance, the costs associated with page caching, or contexts in which only the first few trials of some software (or delivery service) are free. \\

\subsection{Our Results} 
We study the problem of finding fair and efficient allocations of indivisible chores among agents with binary supermodular cost functions. Our work develops several results on the existence and computability of $\EFone$, $\MMS$, Pareto efficient ($\PO$), and Lorenz dominating allocations. Specifically, we show that, for these cost functions, (i) an allocation that is $\EFone$ and $\PO$, (ii) an allocation that is $\PO$ and in which every agent receives its minimax share ($\MMS$), and (iii) a Lorenz dominating allocation, always exist and can be computed in polynomial time (given value-oracle access to the cost functions). These results constitute some of the first positive chore-division guarantees for standard fairness and economic-efficiency notions in discrete fair division. 

In the current context of chore division with binary supermodular costs, we also show that certain pairs of these guarantees, such as Lorenz domination and $\EFone$ (or $\MMS$ along with $\EFone$), are \textit{incomparable}, i.e., an allocation that satisfies one of these fairness criterion does not necessarily bear the other. This is in contrast to some well-known results for goods. For instance, in the complementary case of goods division with binary submodular valuations, {\it any} Lorenz dominating allocation is $\EFone$~\cite{babaioff2021fair}. 
It is interesting to note that while some positive results carry forward from the goods setting to the case of chores, others are negated. 

Our work identifies binary supermodular costs as a relevant function class for which $\EFone$ and $\PO$ allocations of chores are guaranteed to exist. Given this result, a natural follow-up question is whether this guarantee can be strengthened: do $\EFX$ and $\PO$ allocations always exist in the current context? We answer this question in the negative by showing a significantly stronger negative result: for binary supermodular costs, $\PO$ and $\beta$-$\EFkX$ allocations do not exist for any $\beta\in(0,1]$ and for any $k\geq1$, even when the cost functions are identical.

Complementing this negative result and focusing on fairness alone, we present positive results towards the existence of $\EFX$ for chores. We show algorithmically that when the agents have \textit{identical} cost functions, an $\EFX$ allocation always exists. Notably, this result only requires the (identical) cost function to be monotonic. Our algorithm, that we call Add-and-Fix, provides an alternate proof of the existence of $\EFX$ chore allocations for the identical valuations case. A result of Plaut and Roughgarden~\cite{plaut2020almost} shows that a leximin-type solution (that they call leximin++) is $\EFX$ for identical valuations in the case of goods. It is straightforward to extend this result to show the existence of $\EFX$ for chores (see, e.g.,~\cite{aleksandrov2018almost}). However, the leximin++ solution is {\rm NP}-hard to compute even when the agents have identical costs with binary marginals (see Appendix~\ref{appendix:leximax-hardness}). By contrast, our algorithm (Add-and-Fix) obtains an $\EFX$ allocation in polynomial time in this setting. In fact, for any monotonic cost function $c(\cdot)$ that is integer-valued (i.e., $c(S) \in \mathbb{Z}_{\geq 0}$ for all subsets $S$), Add-and-Fix runs in pseudo-polynomial time. Consequently, for the particular case of identical cost functions with binary marginals, we obtain a polynomial-time algorithm for finding $\EFX$ chore allocations. 


\subsection{Additional Related Work} 
Over the past decades, the fair division problem has emerged as a central and influential topic at the interface of mathematical economics and computer science. A collection of early fair-division results study the problem of achieving \textit{envy-freeness} \cite{Fol67}, where no agent prefers the bundle of another. Most of these classic works focus on the divisible setting, where a heterogeneous item (a \textit{cake}) can be fractionally divided to create an allocation; see e.g.~\cite{Str80,brams1996fair}. \\

\noindent 
\textbf{\textit{Indivisible items.}} Over the preceding few years, the research focus has evolved towards studying the fair division of \textit{indivisible goods}, where each good has to be allocated integrally to one agent. Since envy-freeness cannot be necessarily achieved in such combinatorial settings, the goal here is to obtain any of a variety of its relaxations and other approximate fairness criteria, including envy-freeness up to one item ($\EFone$) \cite{budish2011combinatorial,lipton2004approximately}, envy-freeness up to any item ($\EFX$)~\cite{caragiannis2019unreasonable,plaut2020almost}, and the maximin share ($\MMS$) guarantee~\cite{budish2011combinatorial,kurokawa2018fair}. \\

\noindent
\textbf{\textit{Pareto efficiency.}} While most of the results mentioned above consider the problem of achieving fairness alone, a central desideratum is to seek fair allocations that are also economically efficient. 
A standard notion of efficiency in mathematical economics is that of Pareto efficiency or Pareto optimality ($\PO$), in which no agent can be made strictly better off without making at least one other agent worse off in the process. As mentioned previously, the work of Caragiannis et al.~\cite{caragiannis2019unreasonable} shows that when the agents have additive valuations over the goods, an allocation that maximizes Nash welfare is simultaneously $\EFone$ and $\PO$. However, computing a Nash welfare maximizing allocation is ${\rm NP}$-hard. Barman et al.~\cite{barman2018finding} bypass this hardness by showing that an $\EFone$ and $\PO$ allocation (of goods) can be directly computed in pseudo-polynomial time for agents with additive valuations. \\

\noindent
\textbf{\textit{Binary marginals.}} Binary marginals have received substantial attention in the fair division literature for goods (see e.g.~\cite{bogomolnaia2004random,kurokawa2018leximin,babaioff2021fair}). Darmann and Schauer~\cite{darmann2015maximizing} develop an efficient algorithm to find a Nash Welfare maximizing allocation for binary additive valuations, while Barman and Verma~\cite{BP21} present an approximation algorithm for the same problem for the more general class of binary XOS valuations. Truthful mechanisms for fair division of goods under binary additive valuations \cite{halpern2020fair} and binary submodular valuations~\cite{babaioff2021fair} have also been developed in prior works. Babaioff et al.~\cite{babaioff2021fair} and Benabbou et al.~\cite{benabbou2020finding} show the existence and polynomial-time computability of allocations that are $\EFX$ and $\PO$ (and, hence, $\EFone$ and $\PO$) for the class of binary submodular (or, equivalently, matroid-rank) valuations. For this valuation class, Babaioff et al.~\cite{babaioff2021fair} establish existence, efficient computation, and fairness implications of Lorenz dominating allocations. \\

\noindent
\textbf{\textit{Chore division.}}  
While the definitions of envy-freeness, $\MMS$ and Pareto optimality extend directly from the goods case to the chores setting, the fairness criteria of $\EFone$ and $\EFX$ are typically defined via the removal of a chore from the envying agent's bundle (rather than a good from the envied agent's bundle). For approximately-$\MMS$-fair division of chores, several results are known, including an $\frac{11}{9}$-approximation factor \cite{huang2021algorithmic} and a $\frac{44}{43}$-impossibility bound \cite{feige2021tight} under additive valuations. As mentioned previously, while allocations that are $\EFone$ and $\PO$ are known to exist in a variety of settings for goods, in the chore division case such existence results are only known for bivalued instances \cite{ebadian2022,garg2022fair}. We refer the reader to the recent survey by Amanatidis et al.~\cite{amanatidis2022fair} for a comprehensive overview of the discrete fair division literature.\\

\noindent
{\bf Organization of This Paper.} 
Section~\ref{sec:prelim} contains the formal model and relevant definitions, while Sections~\ref{sec:cost-min} through 
\ref{sec:efx} contain our technical results.

First, in Section~\ref{sec:cost-min}, we study the structure of binary supermodular set functions. We present a useful connection between binary supermodular cost functions and the well-known matroid rank valuation functions for goods (Lemma~\ref{lemma:submod-supermod}). We also show (Lemmas~\ref{lemma:min-cost-characterization}~and~\ref{lemma:completing-the-partial-alloc} along with Theorem~\ref{theorem:social-cost-min}) that a variety of cost-minimizing allocations can be computed in polynomial time; these results will be helpful in the subsequent sections and may be of independent interest.

In Section~\ref{sec:positive}, we present our positive results on fair and efficient chore division with binary supermodular cost functions. Specifically, we establish the existence and computability of chore allocations that are simultaneously social-cost-minimizing and satisfy various fairness criteria, under binary super modular costs (Theorems~\ref{thm:po-ef1}, \ref{thm:po-mms}, and \ref{thm:lorenz}). In Section~\ref{sec:incom}, we show  that these fairness properties are incomparable, and in some cases, incompatible (Theorem~\ref{thm:incomparable}). In Section~\ref{sec:efx}, we study the EFX fairness notion. We show that for binary supermodular costs, Pareto efficient alocations that are even approximately EFX do not exist. Focusing on fairness alone, we present an algorithm (Add-and-Fix) that computes an EFX allocation under identical monotone cost functions.


\section{Notations and Preliminaries} \label{sec:prelim}

We consider the problem of dividing a set of $m$ indivisible chores among $n$ agents. Throughout, the sets of chores and agents will be, respectively, denoted by $[m] = \{1, 2, \ldots, m\}$ and $[n] = \{1, 2, \ldots, n\}$. An allocation $\Alloc = (A_1, A_2, \ldots, A_n)$ is a partition of the set of chores $[m]$ into $n$ disjoint subsets, i.e., $A_i \cap A_j = \emptyset$, for all $i \neq j$, and  $\cup_{i \in [n]} A_i = [m]$. In an allocation $\Alloc = (A_1, A_2, \ldots, A_n)$, the subset of chores (also referred as a \emph{bundle}) $A_i$ is assigned to agent $i$. We call an allocation $\Alloc = (A_1, A_2, \ldots, A_n)$ \emph{complete} if all the $m$ chores have been allocated, $\cup_{i \in [n]} A_i = [m]$. Also, an allocation $\Alloc$ is said to be \emph{partial} if it is not complete, $\cup_{i \in [n]} A_i \subsetneq [m]$. The set of all complete allocations of the chores $[m]$ among the $n$ agents will be denoted by $\Pi_n([m]) \coloneqq \{ (A_1, A_2, \ldots, A_n) \ : \ \cup_{i \in [n]} A_i = [m] \text{ and } A_i \cap A_j = \emptyset $ for each $ i \neq j \in [n] \}$. Henceforth, we will simply use the term allocation to denote a complete allocation. For notational convenience, we will write $S \cup \{a\}$ as $S + a$ and $S \setminus \{a\}$ as $S - a$, for any subset $S \subseteq [m]$ and chore $a \in [m]$. \\

\noindent
{\bf Cost functions.} For each agent $i \in [n]$, the cardinal disutilies for subsets of chores is specified by a cost function $c_i : 2^{[m]} \mapsto \mathbb{R}_{\geq 0}$. In particular, for any subset of chores $S \subseteq [m]$, agent $i$'s cost (disutility) for $S$ is denoted as $c_i(S) \in \mathbb{R}_{\geq 0}$. We will write $c_i(\{j\})$ as $c_i(j)$. A chore division instance is formally defined as a triple $\langle [n], [m], \{c_i\}_{i \in [n]} \rangle$. 

We focus on chore division instances in which the cost function $c_i$ of each agent $i \in [n]$ satisfies the \emph{binary marginals} property. Specifically, a set function $f: 2^{[m]} \mapsto \mathbb{R}_{\geq 0}$ bears  the binary marginals property iff, for every subset $S \subseteq [m]$ and chore $a \in [m]$, we have $f(S + a) - f(S) \in \{0,1\}$. That is, the marginal increase in cost upon adding a chore to a bundle is either $0$ or $1$. Note that, by definition, cost functions $f: 2^{[m]} \mapsto \mathbb{R}_{\geq 0}$ that satisfy the binary marginals property are \emph{monotone}, that is, $f(S + a) \geq f(S)$ for every $S \subseteq [m]$ and  $a \in [m]$.

The agents' costs $c_i: 2^{[m]} \mapsto \mathbb{R}_{\geq 0}$ can be classified into a variety of classes of set functions following the hierarchy of complement-free (or substitute-free) functions \cite{AGTBook}. The current work focuses on one such class that captures complementarity among chores. In particular, we study chore division settings wherein, for each agent $i \in [n]$, the cost function $c_i$ (in addition to satisfying the binary marginals property) is supermodular, that is, $c_i(S+a) - c_i(S) \leq c_i(T + a) - c_i(T)$ for all subsets $S \subseteq T$ and chores $a \in [m] \setminus T$.\footnote{Supermodular cost functions are a subclass of superadditive functions since they satisfy $c_i(S) + c_i(T) \leq c_i(S \cup T)$ for every disjoint pair of subsets $S, T \subseteq [m]$.} 


Throughout this work, we will assume that the cost functions of agents can be accessed through oracles that answer value queries: given any subset $S \subseteq [m]$, the oracle for agent $i$ returns $c_i(S)$, the cost of bundle $S$ for agent $i$. In particular, our algorithmic results hold in the value-oracle model and do not require explicit descriptions of the cost functions. \\

\noindent
{\bf Economic efficiency.} An allocation $\Alloc = (A_1, \ldots, A_n)$ is said to \emph{Pareto dominate} another allocation $\AllocB = (B_1,\ldots, B_n)$ iff for each agent $i \in [n]$, we have $c_i(A_i) \leq c_i(B_i)$ and for some agent $j \in [n]$ the inequality is strict (i.e., $c_j(A_j) < c_j(B_j)$). An allocation $\Alloc$ is said to be \emph{Pareto efficient} iff there is no other complete allocation that Pareto dominates it.

For an (partial) allocation $\Alloc = (A_1, A_2, \ldots, A_n)$, the utilitarian social cost $\SC(\Alloc)$ is the sum of the costs of the agents, $\SC(\Alloc) \coloneqq  \sum_{i=1}^{n} c_i(A_i)$; we will simply refer to $\SC(\Alloc)$ as the social cost of $\Alloc$. An allocation $\Alloc^* \in \Pi_n([m])$ that minimizes the social cost among the set of all complete allocations is called a \emph{social cost minimizing} allocation, and we write $c^* \coloneqq \SC(\Alloc^*)$ to denote the \emph{minimum social cost} of the instance. It follows that every social cost minimizing allocation is also Pareto efficient. \\

\noindent
{\bf Fairness.} An (partial) allocation $\Alloc = (A_1, \ldots, A_n)$ is said to be \emph{envy-free up to one chore} ($\EFone$) iff for every pair of agents $i,j \in [n]$ with $A_i \neq \emptyset$, \emph{there exists} some chore $t \in A_i$ such that $c_i(A_i \setminus \{t\}) \leq c_i(A_j)$. That is, agent $i$'s envy for agent $j$ vanishes upon the removal of some chore from her own bundle. $\Alloc = (A_1, \ldots, A_n)$ is said to be \emph{envy-free up to any chore} ($\EFX$) iff for every pair of agents $i,j \in [n]$ with $A_i \neq \emptyset$, and \emph{every} chore $t \in A_i$, we have $c_i(A_i \setminus \{t\}) \leq c_i(A_j)$. That is, agent $i$'s envy for agent $j$ vanishes upon the removal of any chore from her own bundle. 

Additionally, we consider the \emph{minimax share} ($\MMS$) fairness guarantee, which is based on an interpretation of the well-known cut-and-choose protocol. An allocation $\Alloc = (A_1, \ldots, A_n)$ in a fair division instance $\langle [n], [m], \{c_i\}_{i \in [n]} \rangle$ is said to be $\MMS$-fair iff, for each agent $i \in [n]$, the cost $c_i(A_i) \leq \tau_i$. Here, $\tau_i$ is called the {minimax share} of agent $i$ and is defined as follows

\begin{align}
\tau_i \coloneqq \min\limits_{\substack{(X_1, \ldots, X_n) \in \Pi_n([m])}} \ \   \max\limits_{j \in [n]} \ c_i(X_j) \label{eq:defn-minimax}
\end{align}

The minimax share $\tau_i$ can be interpreted as follows: suppose agent $i$ gets to divide all the chores $[m]$ into $n$ bundles, and is then assigned the bundle with the highest cost (according to $c_i$). In order to minimize her cost, agent $i$ will divide the chores $[m]$ into bundles $(X_1, X_2, \ldots, X_n)$ such that the maximum cost among the bundles (with respect to $c_i$) is minimized. The minimax share $\tau_i$ of agent $i$ is the minimum cost of the bundle that agent $i$ can obtain in this thought experiment.

For every allocation $\Alloc = (A_1,\ldots, A_n)$ we define the \emph{cost profile} of $\Alloc$ to be the $n$-tuple of the costs of the agents, $(c_1(A_1), c_2(A_2),\allowbreak \ldots, c_n(A_n))$. Furthermore, for any allocation $\Alloc = (A_1,\ldots, A_n)$, the sorted cost profile $\sigma(\Alloc)$ is the $n$-tuple obtained by sorting the profile of costs incurred by the agents in nonincreasing order. That is, $\sigma(\Alloc) = (c_{i_1}(A_{i_1}), c_{i_2}(A_{i_2}) \ldots, c_{i_n}(A_{i_n}))$, where $c_{i_1}(A_{i_1}) \geq c_{i_2}(A_{i_2}) \ldots \geq c_{i_n}(A_{i_n})$ and $i_1, i_2, \ldots, i_n$ is a permutation of $[n]$.

Based on this, we now define \emph{Lorenz domination} and obtain the partial order induced by this notion over the allocations. Formally, allocation $\Alloc = (A_1, \ldots, A_n)$ \emph{Lorenz dominates} allocation $\AllocB = (B_1,\ldots, B_n)$ iff, for each index $k \in [n]$, the sum of the first $k$ components of $\sigma(\Alloc)$ is at most the sum of the first $k$ components of $\sigma(\AllocB)$. That is, $\sum_{i=1}^k \sigma(\Alloc)_i \leq \sum_{i=1}^k \sigma(\AllocB)_i$, where, $\sigma(\Alloc)_i$ and $\sigma(\AllocB)_i$ denote the $i^{th}$ component of $\sigma(\Alloc)$ and $\sigma(\AllocB)$, respectively. We write $\sigma(\Alloc) \geq_{L} \sigma(\AllocB)$ to denote that allocation $\Alloc$ Lorentz dominates allocation $\AllocB$. An allocation $\Alloc$ is said to be \emph{Lorenz dominating} iff it Lorenz dominates all other complete allocations. We note that, in general, a Lorenz dominating allocation may not exist, but if it does then it is also social-cost minimizing (and, hence, Pareto efficient) and also \emph{leximin}.\footnote{An allocation $\Alloc$ is called leximin iff it lexicographically minimizes the sorted cost profile $\sigma(\Alloc)$ among the set of all complete allocations.} \\

\noindent
{\bf Matroids and rank functions.}
A matroid is defined as a tuple $\mat = ([m], \mathcal{I})$ where $[m]$ is the ground set and $\mathcal{I} \subseteq 2^{[m]}$ is the set of all independent sets. The subsets $A \in \mathcal{I}$ are called the independent sets of the matroid $\mathcal{M}$. A matroid satisfies the following two properties: $(i)$ the hereditary property, which mandates that if subset $A \in \mathcal{I}$, then every subset $B \subseteq A$ must also be independent, i.e. $B \in \mathcal{I}$, and $(ii)$ the augmentation property, which requires that for every pair of independent sets $A,B \in \mathcal{I}$ satisfying $|B| < |A|$, there must exist an element $a \in A \setminus B$ such that $B + a \in \mathcal{I}$. Any cardinality-wise largest independent set of a matroid $\mat$ is called a \emph{basis} of $\mat$.

Associated with every matroid $\mat = ([m], \mathcal{I})$, there is a rank function $r: 2^{[m]} \mapsto \mathbb{R}_{\geq 0}$ that specifies for any subset $S \subseteq [m]$ the size of largest independent subset within $S$; formally, $r(S) \coloneqq \max\{|I| : I \subseteq S \text{ and } I \in \mathcal{I}\}$. Note that, by definition, matroid-rank functions are monotone and they satisfy $0 \leq r(S) \leq |S|$ for all subsets $S \subseteq [m]$. The equality $r(S) = |S|$ holds iff $S \in \mathcal{I}$. The following characterization is well-known: a set function $r: 2^{[m]} \mapsto \mathbb{R}_{\geq 0}$ is a matroid-rank function iff $r$ is a binary submodular function~\cite{schrijver2003combinatorial}. Since every matroid-rank function $r$ has binary marginals, for any given subset $A \subseteq [m]$, we can efficiently compute a subset $I \subseteq A$ that satisfies $r(A) = r(I) = |I|$, i.e., $I$ is a largest-cardinality independent subset of $A$. 

Given matroids $\mat_1 = ([m], \mathcal{I}_1), \ldots, \mat_n = ([m], \mathcal{I}_n)$, we can define a new matroid by considering their \emph{union}~\cite{schrijver2003combinatorial}. Such a union is denoted as $\umat = \cup_{i \in [n]} \mat_i = ([m], \widehat{\mathcal{I}})$ and its independent sets are composed as follows $\widehat{\mathcal{I}} \coloneqq \{A_1 \cup \ldots \cup A_n : A_i \in \mathcal{I}_i \text{ for all } i \in [n]\}$. Equivalently, $S \subseteq [m]$ is an independent set of $\umat$ iff $S$ can be partitioned into subsets $S_1, S_2, \ldots, S_n$ with the property that $S_i \in \mathcal{I}_i$, for all $i \in [n]$. One can verify that $\umat$ satisfies the hereditary and augmentation properties and, hence, is a matroid~\cite{schrijver2003combinatorial}. The rank function of $\umat$ is denoted by $\widehat{r}$.

The well-known matroid union theorem \cite[Corollary~42.1a]{schrijver2003combinatorial} gives us an expression for $\widehat{r}$ in terms of the rank functions $r_1, \ldots, r_n$ of the matroids $\mat_1, \ldots, \mat_n$, respectively. The matroid union theorem is stated next. 
\begin{theorem}
\label{lemma:mat-union}
If $\umat$ is the union of matroids $\mat_1 = ([m], \mathcal{I}_1)$, $\mat_2 = ([m], \mathcal{I}_2)$, $\ldots$, $\mat_n = ([m], \mathcal{I}_n)$ with rank functions $r_1, \dots r_n: 2^{[m]} \mapsto \mathbb{R}_{\geq 0}$ respectively, then, the rank function $\widehat{r}$ of $\umat$ can be expressed as 
$ \widehat{r}(S) \coloneqq \min_{T \subseteq S} \Bigg( |S \setminus T| + \sum_{i \in [n]} r_i(T) \Bigg)$ for all $S \subseteq [m]$.
\end{theorem}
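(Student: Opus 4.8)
The plan is to prove the two inequalities separately; write $R(S):=\min_{T\subseteq S}\big(|S\setminus T|+\sum_{i\in[n]}r_i(T)\big)$ for the claimed expression, and fix an arbitrary $S\subseteq[m]$. After replacing each $\mat_i$ by its restriction to $S$ (whose rank function is $r_i$ on subsets of $S$) we may assume $S=[m]$, so that $\widehat r(S)$ equals the largest value of $|I_1\cup\cdots\cup I_n|$ over choices $I_i\in\mathcal I_i$; moreover, shrinking the parts via the hereditary property, the union may be taken \emph{disjoint}. For the easy inequality $\widehat r(S)\le R(S)$: given a disjoint union $I=I_1\sqcup\cdots\sqcup I_n$ with $I_i\in\mathcal I_i$ and any $T\subseteq S$, decompose $|I|=|I\setminus T|+\sum_i|I_i\cap T|$; since $|I\setminus T|\le|S\setminus T|$ and each $I_i\cap T$ is independent in $\mat_i$ (hence $|I_i\cap T|\le r_i(T)$), we get $|I|\le|S\setminus T|+\sum_i r_i(T)$, and maximizing over $I$ and then minimizing over $T$ yields $\widehat r(S)\le R(S)$.

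\textbf{The matching lower bound, via matroid intersection.} For $\widehat r(S)\ge R(S)$ I would reduce matroid union to matroid intersection. Let $E:=S\times[n]$ with projection $\pi(s,i)=s$, and define on $E$ the matroids $\mat':=\mat_1\oplus\cdots\oplus\mat_n$ (direct sum, with $\mat_i$ supported on the copy $S\times\{i\}$; rank $r'(U)=\sum_i r_i(\{s:(s,i)\in U\})$) and $\mat''$, the partition matroid whose parts are the fibers $\{s\}\times[n]$, each of capacity one (rank $r''(U)=|\pi(U)|$). A set $U\subseteq E$ is independent in both $\mat'$ and $\mat''$ precisely when the sets $A_i:=\{s:(s,i)\in U\}$ are pairwise disjoint and satisfy $A_i\in\mathcal I_i$, i.e. precisely when $U$ encodes a disjoint union $\bigsqcup_i A_i\in\widehat{\mathcal I}$, with $|U|=|\bigcup_i A_i|$; hence a maximum common independent set has size $\widehat r(S)$. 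The matroid intersection min--max theorem then gives $\widehat r(S)=\min_{U\subseteq E}\big(r'(U)+r''(E\setminus U)\big)$, and it remains only to rewrite this minimum: putting $T:=\bigcap_i\{s:(s,i)\in U\}$, monotonicity of the $r_i$ gives $r'(U)+r''(E\setminus U)\ge\sum_i r_i(T)+|S\setminus T|\ge R(S)$, while taking $U=T\times[n]$ for an optimal $T$ attains $R(S)$ exactly. Therefore $\widehat r(S)=R(S)$, and since $S$ was arbitrary this proves the claim.

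\textbf{Main obstacle, and a self-contained alternative.} Along this route the substantive ingredient is the matroid intersection theorem itself, so proving \emph{that} from scratch is the real obstacle; the standard argument runs through an exchange digraph on the ground set and shows that a maximum common independent set either admits a shortest augmenting path (producing a strictly larger one) or exposes a bipartition of the ground set certifying optimality. If one prefers to avoid invoking matroid intersection, the same template applies directly to $\umat$: start from a maximum disjoint union $I=\bigsqcup_i I_i$, build the exchange graph whose arcs record feasible single-element exchanges inside each $\mat_i$, and let $T$ consist of the elements reachable from $S\setminus I$; the crux is then to show this $T$ is tight, i.e. $|I|=|S\setminus T|+\sum_i r_i(T)$ --- equivalently, that non-tightness yields a path that can be traced back to enlarge the union, contradicting maximality --- while correctly handling all $n$ matroids simultaneously and ensuring that the augmentation genuinely increases $|I|$.
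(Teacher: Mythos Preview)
Your proof is correct and follows the standard route: the easy inequality is immediate, and for the hard direction you encode the union as a matroid intersection on $S\times[n]$ and invoke the min--max formula. The simplification of the intersection minimum to $R(S)$ via $T=\bigcap_i\{s:(s,i)\in U\}$ is clean and accurate.

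However, there is nothing to compare against: the paper does not prove this statement. It is quoted as the classical matroid union theorem, cited directly as \cite[Corollary~42.1a]{schrijver2003combinatorial}, and used as a black box throughout (e.g., in Lemmas~\ref{lemma:min-cost-characterization} and~\ref{lemma:sum-of-shares}). So your proposal is not an alternative to the paper's argument---it is simply a proof of a background result the paper takes for granted. Your remark that the substantive ingredient is the matroid intersection theorem is exactly right; Schrijver's proof of Corollary~42.1a proceeds along essentially the same lines.
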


For ease of notation, we will use $\mat_{i \times k}$ to denote the $k$-fold union of matroid $\mat_i$, i.e., $\mathcal{M}_{i \times k} = \cup_{j \in [k]} \mat_i$ for any integer $k \in \mathbb{Z}_+$. Additionally, we will use $r_{i \times k}$ to denote the rank function of $\mat_{i \times k}$.

\section{Social Cost Minimization} \label{sec:cost-min}
We begin by studying the problem of minimizing the utilitarian social cost for instances $\langle [n], [m], \{c_i\}_{i \in [n]} \rangle$ with binary supermodular cost functions. The main result of this section is a polynomial-time algorithm (Algorithm \ref{algo:social-cost-min}) for this problem. The algorithm only requires  value-oracle access to the cost functions. 

First, we develop a characterization of the minimum social cost for instances with binary supermodular cost functions (Lemma \ref{lemma:min-cost-characterization}). Specifically, for such instances, we provide an expression for the minimum social cost; this expression is crucially used in proofs throughout the subsequent sections. Here, the key idea is based on the fact that for every binary supermodular cost function $c_i$ we can associate a matroid $\mat_i$ and, then, express the minimum social cost (under cost functions $\{c_i\}_{i \in [n]}$) in terms of the rank of the union matroid $\umat = \cup_{i \in [n]} \mat_i$. 

This characterization enables us to efficiently construct (via the matroid-union algorithm) a {\it partial} allocation of chores $\AllocB = (B_1, \ldots, B_n)$ with the following useful properties: The partial allocation $\AllocB$ has zero social cost, and \emph{every} possible way to complete the partial allocation $\AllocB$ (by assigning the unallocated chores $[m] \setminus (\cup_{i \in [n]} B_i)$ among the agents) results in a social cost minimizing complete allocation. Algorithm \ref{algo:social-cost-min} first computes such a partial allocation (via Subroutine \ref{subroutine:cost-min-partial-alloc}) and then allocates the remaining chores arbitrarily, thereby obtaining a social cost minimizing allocation.

The following lemma shows that there is a bijection between binary supermodular costs and matroid-rank functions. The proof of the following lemma appears in Appendix~\ref{appendix:missing-proofs-sc}.

\begin{restatable}{lemma}{SubmodSupermod}
\label{lemma:submod-supermod}
A set function $f:2^{[m]} \mapsto \mathbb{R}_{\geq 0}$ is binary supermodular iff the set function $g:2^{[m]} \mapsto \mathbb{R}_{\geq 0}$, defined as follows, is a matroid-rank function:  $g(S) \coloneqq |S| - f(S)$, for all $S \subseteq [m]$.
\end{restatable}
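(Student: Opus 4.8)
The plan is to reduce the statement to the characterization quoted earlier in the paper --- that a set function is a matroid-rank function if and only if it is binary submodular --- via the elementary algebraic identity relating the discrete derivatives of $f$ and of $g(S) := |S| - f(S)$. Concretely, for every $S \subseteq [m]$ and every $a \in [m] \setminus S$,
\[
g(S+a) - g(S) \;=\; \bigl(|S|+1 - f(S+a)\bigr) - \bigl(|S| - f(S)\bigr) \;=\; 1 - \bigl(f(S+a) - f(S)\bigr),
\]
so a marginal of $g$ is exactly one minus the corresponding marginal of $f$. Everything follows from pushing this identity through the definitions.

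For the forward direction, suppose $f$ is binary supermodular. Using the standard normalization $f(\emptyset) = 0$ for cost functions we get $g(\emptyset) = 0$. The identity above and $f(S+a) - f(S) \in \{0,1\}$ give $g(S+a) - g(S) \in \{0,1\}$, so $g$ has binary marginals; in particular $g$ is monotone with $0 \le g(S) \le |S|$, hence $g \ge 0$ and $g$ is integer-valued. For submodularity (equivalently, the decreasing-marginals property), fix $S \subseteq T$ and $a \in [m] \setminus T$; supermodularity of $f$ says $f(S+a) - f(S) \le f(T+a) - f(T)$, and applying the order-reversing map $x \mapsto 1-x$ yields $g(S+a) - g(S) \ge g(T+a) - g(T)$. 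Thus $g$ is binary submodular with $g(\emptyset) = 0$, so by the cited characterization $g$ is a matroid-rank function.

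The converse is the identical computation read backwards: if $g(S) = |S| - f(S)$ is a matroid-rank function then it is binary submodular with $g(\emptyset) = 0$ and $g(S) \le |S|$, so $f(S) = |S| - g(S)$ is non-negative with $f(\emptyset) = 0$; the identity $f(S+a) - f(S) = 1 - (g(S+a) - g(S))$ shows $f$ has binary marginals, and the decreasing-marginals property of $g$ turns into the increasing-marginals (supermodularity) property of $f$ under $x \mapsto 1-x$.

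There is no substantive obstacle here: the whole content is that the affine change of variables $f \leftrightarrow |\cdot| - f$ negates every discrete derivative while keeping its values in $\{0,1\}$, and hence swaps ``supermodular with binary marginals'' for ``submodular with binary marginals.'' The only steps needing a word of care are (i) keeping the empty-set normalization consistent in both directions (so that $g(\emptyset) = 0$ and $f(\emptyset) = 0$), and (ii) checking that the side constraints line up --- non-negativity of $f$, the bound $r(S) \le |S|$ for a rank function, and integrality --- all of which are immediate from binary marginals together with vanishing at $\emptyset$.
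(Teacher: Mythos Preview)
Your proof is correct and follows essentially the same route as the paper: both arguments rest on the marginal identity $g(S+a)-g(S)=1-\bigl(f(S+a)-f(S)\bigr)$, use it to transfer binary marginals between $f$ and $g$, and then observe that the map $x\mapsto 1-x$ swaps increasing marginals (supermodularity) for decreasing marginals (submodularity), before invoking the cited characterization of matroid-rank functions as binary submodular functions. Your version is slightly more explicit about the normalization $f(\emptyset)=0$ and the side constraints (non-negativity, $g(S)\le |S|$), which the paper leaves implicit.
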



Lemma \ref{lemma:submod-supermod} shows that for each agent $i \in [n]$, with binary supermodular cost function $c_i$, we can obtain a matroid-rank function $r_i : 2^{[m]} \mapsto \mathbb{R}_{\geq 0}$ by defining $r_i(S) = |S| - c_i(S)$, for all subsets $S \subseteq [m]$. Therefore, we can identify a matroid $\mat_i$ whose rank function is $r_i$. The lemma below builds upon this to provide an expression for the minimum social cost in terms of the rank of the union matroid $\umat = \cup_{i \in [n]} \mat_i$. 

\begin{restatable}{lemma}{mincostchar}\label{lemma:min-cost-characterization}
Consider a fair chore division instance $\langle [n], [m], \{c_i\}_{i \in [n]}\rangle$ with binary supermodular costs $\{c_i\}_i$. For each agent $i \in [n]$, let $r_i$ be the matroid-rank function obtained by setting $r_i(S) = |S| - c_i(S)$, for all subsets $S \subseteq [m]$. Also, let $\mat_i = ([m], \mathcal{I}_i)$ denote the matroid whose rank function is $r_i$ and $\widehat{r}$ denote the rank-function of the union matroid $\umat = \cup_{i=1}^n \mat_i$. Then, the minimum social cost $c^*$ of the instance $\langle [n], [m], \{c_i\}_{i \in [n]}\rangle$ satisfies $c^* = m - \widehat{r}([m])$.
\end{restatable}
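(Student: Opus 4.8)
The plan is to reduce social-cost minimization to a pure matroid-union question. First I would invoke Lemma~\ref{lemma:submod-supermod} to rewrite each agent's cost as $c_i(S) = |S| - r_i(S)$, where $r_i$ is the matroid-rank function of $\mat_i$. Then, for any complete allocation $\Alloc = (A_1, \ldots, A_n) \in \Pi_n([m])$, the fact that $(A_1, \ldots, A_n)$ partitions $[m]$ gives
\[
\SC(\Alloc) = \sum_{i \in [n]} c_i(A_i) = \sum_{i \in [n]} \big( |A_i| - r_i(A_i) \big) = m - \sum_{i \in [n]} r_i(A_i).
\]
Hence minimizing social cost over $\Pi_n([m])$ is exactly maximizing $\sum_{i \in [n]} r_i(A_i)$ over $\Pi_n([m])$, and it suffices to prove that this maximum equals $\widehat{r}([m])$.

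For the bound $\max_{\Alloc} \sum_i r_i(A_i) \le \widehat{r}([m])$, I would take an arbitrary partition $(A_1,\dots,A_n)$ and, for each $i$, pick a maximum independent subset $I_i \subseteq A_i$ with $I_i \in \mathcal{I}_i$ and $|I_i| = r_i(A_i)$. The $I_i$ are pairwise disjoint since the $A_i$ are, so $I_1 \cup \cdots \cup I_n \in \widehat{\mathcal{I}}$ by the very definition of the union matroid $\umat = \cup_{i \in [n]} \mat_i$, whence $\sum_i r_i(A_i) = |I_1 \cup \cdots \cup I_n| \le \widehat{r}([m])$. For the reverse direction, I would let $J$ be a basis of $\umat$ (so $|J| = \widehat{r}([m])$); by definition of $\widehat{\mathcal{I}}$ there is a partition $J = J_1 \cup \cdots \cup J_n$ with $J_i \in \mathcal{I}_i$. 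Extend it to a complete allocation $\Alloc = (A_1,\dots,A_n)$ by assigning the chores of $[m] \setminus J$ arbitrarily while keeping $J_i \subseteq A_i$; monotonicity of each $r_i$ together with $J_i$ being independent gives $r_i(A_i) \ge r_i(J_i) = |J_i|$, so $\sum_i r_i(A_i) \ge \sum_i |J_i| = |J| = \widehat{r}([m])$. Combining the two bounds yields $\max_{\Alloc} \sum_i r_i(A_i) = \widehat{r}([m])$ and therefore $c^* = m - \widehat{r}([m])$.

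I do not expect any serious obstacle here: once the reduction to maximizing $\sum_i r_i(A_i)$ is in place, the argument is essentially bookkeeping. The only points needing a little care are that a basis of the union matroid admits a partition into independent sets of the constituent matroids (which is precisely the definition of $\widehat{\mathcal{I}}$, not the matroid-union formula) and that extending the $J_i$ to a full partition of $[m]$ cannot decrease $\sum_i r_i(A_i)$, which is where monotonicity of the $r_i$ is used. Note that Theorem~\ref{lemma:mat-union} is not needed for this lemma itself — only the definition of $\umat$ and monotonicity — although that formula is what will later make $\widehat{r}([m])$, and hence $c^*$, efficiently computable.
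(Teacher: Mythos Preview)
Your proposal is correct and follows essentially the same route as the paper: reduce to maximizing $\sum_i r_i(A_i)$ via $c_i(S)=|S|-r_i(S)$, then prove that this maximum equals $\widehat{r}([m])$ by taking a basis of $\umat$ for one inequality and independent subsets $I_i\subseteq A_i$ for the other. The only cosmetic difference is that the paper phrases the upper bound on $\sum_i r_i(A_i)$ as a contradiction argument, whereas you give the same step directly.
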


\begin{proof}
By definition, for the minimum social cost $c^*$ we have
\begin{align*}
c^* & = \min_{(A_1, \ldots, A_n) \in \Pi_n([m])} \ \sum_{i \in [n]} c_i(A_i) \\
& = \min_{(A_1, \ldots, A_n) \in \Pi_n([m])} \ \sum_{i \in [n]} \big( |A_i| - r_i(A_i) \big) \tag{since $r_i(A_i) = |A_i| - c_i(A_i)$} \\
& = m - \max_{(A_1, \ldots, A_n) \in \Pi_n([m])} \sum_{i \in [n]} r_i(A_i) \numberthis \label{lemma-min-cost:eq1}
\end{align*}
The last equality uses the fact that $m = \sum_{i \in [n]} |A_i|$; recall that we are considering the social cost of complete allocations $(A_1, \allowbreak \ldots, A_n)$. To complete the proof we will show that 

\begin{align}
\label{lemma-min-cost:eq2}
\max_{(A_1, \ldots, A_n) \in \Pi_n([m])} \sum_{i \in [n]} r_i(A_i) = \widehat{r}([m]).
\end{align}
Indeed, equations (\ref{lemma-min-cost:eq1}) and (\ref{lemma-min-cost:eq2}) together imply the desired equality $c^* = m - \widehat{r}([m])$. 

To establish equation (\ref{lemma-min-cost:eq2}), consider any basis $B$ of $\umat$, which (by definition of the matroid union) can be partitioned into subsets $B_1, B_2, \ldots, B_n$ with the property that $B_i \in \mathcal{I}_i$, for all $i \in [n]$. Since $B$ is a basis of $\umat$, we have 
\begin{align}
\widehat{r}([m]) &= |B| = \sum_{i \in [n]} |B_i| = \sum_{i \in [n]} r_i(B_i) &\nonumber\\ &\leq \max_{(A_1, \ldots, A_n) \in \Pi_n([m])} \sum_{i \in [n]} r_i(A_i) \label{ineq:rank-up}
\end{align}
Here, the last inequality follows from the fact that the rank functions $r_i$ are monotonic. In particular, by arbitrarily assigning the chores remaining in $[m] \setminus \cup_{i \in [n]} B_i$, we can extend the partial allocation $(B_1, \ldots, B_n)$ into a complete allocation $(B'_1, \ldots, B'_n)$. We then obtain $\sum_{i \in [n]} r_i(B_i) \leq \sum_{i \in [n]} r_i(B'_i) \leq \max_{(A_1, \ldots, A_n) \in \Pi_n([m])} \allowbreak \sum_{i \in [n]} r_i(A_i)$.
 
Next, we show that the inequality (\ref{ineq:rank-up}) cannot be strict. Assume, towards a contradiction, that exists a complete allocation $(A_1, \ldots, A_n)$ such that $\widehat{r}([m]) <  \sum_{i \in [n]} r_i(A_i)$. Note that within each subset $A_i$ there exists an independent set $A'_i \in \mathcal{I}_i$ such that $r_i(A_i) = r_i(A'_i) = |A'_i|$. Therefore, $\sum_{i \in [n]} r_i(A_i) = \sum_{i \in [n]} |A'_i|$. Furthermore, the fact that $A'_i \in \mathcal{I}_i$, for each $i \in [n]$, implies that the union $\cup_{i \in [n]} \widehat{A}'_i$ is an independent set of $\umat$. Hence, $\widehat{r}([m]) \geq |\cup_{i \in [n]} \widehat{A}'_i| = \sum_{i \in [n]} |A'_i|$. Hence, by contradiction, we obtain $\widehat{r}([m]) \geq \max_{(A_1, \ldots, A_n) \allowbreak \in \Pi_n([m])} \allowbreak \sum_{i \in [n]} r_i(A_i)$. This inequality and (\ref{ineq:rank-up}) imply equation (\ref{lemma-min-cost:eq2}). 

The lemma stands proved. 
\end{proof}

\floatname{algorithm}{Subroutine}
\begin{algorithm}[ht]
\caption{\textsc{CostMinPartAlloc}}
\begin{tabularx}{\textwidth}{X}
{\bf Input:} Chore division instance  $\langle [n], [m], \{c_i\}_{i \in [n]}\rangle$ with value-oracle access to the costs. \\
{\bf Output:} Partial allocation $\AllocB$.
\end{tabularx}
  \begin{algorithmic}[1]
  		\STATE Define for each $i \in [n]$, matroid-rank functions $r_i(S) \coloneqq |S| - c_i(S)$ for all $S \subseteq [m]$ and let $\mathcal{M}_i = ([m], \mathcal{I}_i)$ be the matroid corresponding to the function $r_i$. \label{line:subroutine-mrf}
  		\STATE Let union matroid $\widehat{\mathcal{M}} \coloneqq  \cup_{i \in [n]} \mathcal{M}_i$. \label{line:subroutine-union-mat}
  		\STATE Compute a basis $B$ of the matroid $\widehat{\mathcal{M}}$ along with an $n$-partition of $B$ into subsets $(B_1, B_2, \ldots, B_n)$ such that $B_i \in \mathcal{I}_i$ via the matroid union algorithm \cite{schrijver2003combinatorial}. \label{line:subroutine-matroid-union-alg}
  		\STATE \textbf{return } $\AllocB = \left( B_1, B_2, \ldots, B_n \right)$.
		\end{algorithmic}
		\label{subroutine:cost-min-partial-alloc}
\end{algorithm}

The above lemma forms the basis of Subroutine \ref{subroutine:cost-min-partial-alloc} which is the key component of Algorithm \ref{algo:social-cost-min}. For each agent $i \in [n]$, let $\mat_i = ([m], \mathcal{I}_i)$ be the matroid whose rank-function is defined as $r_i(S) = |S| - c_i(S)$ and $\umat = \cup_{i \in [n]} \mat_i$ be the union of matroids $\{\mat_i\}_{i \in [n]}$. Subroutine \ref{subroutine:cost-min-partial-alloc} uses the matroid union algorithm \cite[Chapter 42.3]{schrijver2003combinatorial} to compute a basis $B$ of the union matroid $\umat$ and the corresponding $n$-partition of $B$ into $(B_1, B_2, \ldots, B_n)$. This $n$-partition, which is also a partial allocation, is then returned by the subroutine. 

In the following lemma we show that \emph{every} possible way of completing the partial allocation $(B_1,\ldots, B_n)$ by assigning the unallocated chores ($[m] \setminus \cup_{i \in [n]} B_i$) among the agents results in a social cost minimizing allocation. This property is crucially used in computing fair and efficient allocations in the subsequent sections.


\begin{lemma}
\label{lemma:completing-the-partial-alloc} 
For a fair chore division instance $\langle [n], [m], \{c_i\}_{i \in [n]}\rangle$, let $\mathcal{B} = (B_1,\ldots, B_n)$ be the partial allocation returned by Subroutine \ref{subroutine:cost-min-partial-alloc}. Then, any complete allocation $\overline{\AllocB} = (\overline{B}_1, \ldots, \overline{B}_n) \in \Pi_n([m])$ with the property that $B_i \subseteq \overline{B}_i$, for all $i \in [n]$, satisfies $\sum_{i \in [n]} c_i(\overline{B}_i) = c^*$, i.e., $\overline{\AllocB}$ is a social cost minimizing complete allocation in the given instance. 
\end{lemma}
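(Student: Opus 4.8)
The plan is to reduce the statement to the rank identity already established inside the proof of Lemma~\ref{lemma:min-cost-characterization}. Recall that for each agent $i$ the function $r_i(S) = |S| - c_i(S)$ is the rank function of the matroid $\mat_i$, and that Subroutine~\ref{subroutine:cost-min-partial-alloc} returns $(B_1,\dots,B_n)$ where $B = \cup_{i \in [n]} B_i$ is a basis of $\umat$ and each $B_i \in \mathcal{I}_i$. In particular the $B_i$ are pairwise disjoint, $|B| = \widehat{r}([m])$, and $r_i(B_i) = |B_i|$ for every $i \in [n]$.

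First I would rewrite the social cost of an arbitrary completion in terms of the rank functions. Fix any $\overline{\AllocB} = (\overline{B}_1,\dots,\overline{B}_n) \in \Pi_n([m])$ with $B_i \subseteq \overline{B}_i$ for all $i$. Since the $\overline{B}_i$ partition $[m]$, we have $\SC(\overline{\AllocB}) = \sum_{i \in [n]} c_i(\overline{B}_i) = \sum_{i \in [n]} \big(|\overline{B}_i| - r_i(\overline{B}_i)\big) = m - \sum_{i \in [n]} r_i(\overline{B}_i)$. Combined with Lemma~\ref{lemma:min-cost-characterization}, which gives $c^* = m - \widehat{r}([m])$, the desired equality $\SC(\overline{\AllocB}) = c^*$ is equivalent to showing $\sum_{i \in [n]} r_i(\overline{B}_i) = \widehat{r}([m])$.

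Then I would sandwich $\sum_{i \in [n]} r_i(\overline{B}_i)$ between $\widehat{r}([m])$ on both sides. For the lower bound, monotonicity of each $r_i$ together with $B_i \subseteq \overline{B}_i$ gives $\sum_{i \in [n]} r_i(\overline{B}_i) \geq \sum_{i \in [n]} r_i(B_i) = \sum_{i \in [n]} |B_i| = |B| = \widehat{r}([m])$. For the upper bound, I would invoke equation~(\ref{lemma-min-cost:eq2}) from the proof of Lemma~\ref{lemma:min-cost-characterization}, namely $\max_{(A_1,\dots,A_n) \in \Pi_n([m])} \sum_{i \in [n]} r_i(A_i) = \widehat{r}([m])$; since $\overline{\AllocB}$ is itself a complete allocation, $\sum_{i \in [n]} r_i(\overline{B}_i) \leq \widehat{r}([m])$. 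The two bounds force $\sum_{i \in [n]} r_i(\overline{B}_i) = \widehat{r}([m])$, hence $\SC(\overline{\AllocB}) = c^*$, and since every social cost minimizing complete allocation is Pareto efficient, $\overline{\AllocB}$ is as claimed.

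There is no genuine obstacle here; the argument is a two-line squeeze once the pieces are lined up. The only point requiring a bit of care is that the upper bound is precisely the second half of the proof of Lemma~\ref{lemma:min-cost-characterization}, so I would either cite equation~(\ref{lemma-min-cost:eq2}) directly or reproduce its short justification (in each bundle $\overline{B}_i$ pick an independent set $A'_i \subseteq \overline{B}_i$ with $r_i(\overline{B}_i) = |A'_i|$, and observe that $\cup_{i \in [n]} A'_i$ is an independent set of $\umat$, so its size is at most $\widehat{r}([m])$).
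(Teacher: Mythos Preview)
Your proof is correct and follows essentially the same approach as the paper: both arguments sandwich the social cost (equivalently, $\sum_i r_i(\overline{B}_i)$) between $c^*$ on both sides using that each $B_i$ is independent in $\mat_i$ and that $B$ is a basis of $\umat$. The only cosmetic difference is that the paper phrases the upper bound on cost directly via the binary-marginals property (adding $|U|=c^*$ chores to zero-cost bundles raises total cost by at most $c^*$), whereas you phrase the equivalent step as monotonicity of the rank functions $r_i$; these are the same fact in dual language.
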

\begin{proof}
For the returned partial allocation $\mathcal{B} = (B_1,\ldots, B_n)$, write $U$ to denote the set of unallocated chores, $U \coloneqq [m] \setminus \big( \cup_{i \in [n]} B_i \big)$. To establish the lemma we will prove that \\
\noindent
({\rm I}) The cost $c_i(B_i) = 0$, for all agents $i \in [n]$, and \\
\noindent
({\rm II}) The number of unassigned chores $|U| = c^*$.
 
Note that, together, ({\rm I}) and ({\rm II}) imply the lemma: starting from the partial allocation $\mathcal{B} = (B_1, B_2, \ldots, B_n)$, we allocate all the chores in $U$ and obtain complete allocation $\overline{\AllocB} = (\overline{B}_1, \ldots, \overline{B}_n)$. Therefore, $\sum_{i=1}^n c_i(\overline{B}_i) \leq \sum_{i=1}^n c_i(B_i) + |U| = c^*$; these bounds follow from ({\rm I}), ({\rm II}), and the fact that $c_i$s have binary marginals.

Furthermore, by definition of $c^*$ we must have $\sum_{i \in [n]} c_i(\overline{B}_i) \geq  c^*$. Therefore, the complete allocation $\overline{\AllocB}$ must be a social cost minimizing allocation, $\sum_{i \in [n]} c_i(\overline{B}_i) = c^*$.

For establishing ({\rm I}) we note that $c_i(B_i) = |B_i| - r_i(B_i) = |B_i| - |B_i| = 0$, for all agents $i \in [n]$. Here, the first equality follows from Line \ref{line:subroutine-mrf} and second follows from Line \ref{line:subroutine-matroid-union-alg}; recall that $B_i$ is an independent set of matroid $\mathcal{M}_i$ and, hence, $r_i(B_i) = |B_i|$. 

For proving ({\rm II}) we use the fact that $|U| = |[m] \setminus \big( \cup_{i \in [n]} B_i \big)| = m - |\cup_{i \in [n]} B_i| = m - \widehat{r}([m])$; the last equality follows since $\cup_{i \in [n]} B_i$ is a basis of $\umat$ (Line \ref{line:subroutine-matroid-union-alg}). Therefore, Lemma \ref{lemma:min-cost-characterization} (i.e., the equality $m - \widehat{r}([m]) = c^*$) gives us the desired bound $|U| = m - \widehat{r}([m]) = c^*$.

As mentioned previously, ({\rm I}) and ({\rm II}) imply the lemma. This completes the proof. 
\end{proof}

\floatname{algorithm}{Algorithm}
\begin{algorithm}[ht]
\caption{\textsc{SocialCostMin}} \label{algo:social-cost-min}
\begin{tabularx}{\textwidth}{X}
{\bf Input:} Chore division instance  $\langle [n], [m], \{c_i\}_{i \in [n]}\rangle$ with value-oracle access to the costs $\{c_i\}_i$. \\
{\bf Output:} Social-cost minimizing complete allocation $\Alloc$. 
\end{tabularx}
 \begin{algorithmic}[1]
\STATE Set $( A_1,\ldots, A_n) = \textsc{CostMinPartAlloc}(\langle [n], [m], \{c_i\}_{i} \rangle)$ and define set of unallocated chores $U \coloneqq  [m] \setminus \left( \cup_{i \in [n]} A_i \right)$.
\WHILE{$U \neq \emptyset$} \label{alg1:loop-begin}
\STATE Select an arbitrary chore $t \in U$ and an arbitrary agent $i \in [n]$. Update $A_i \gets A_i + t$ and $U \gets U - t$. 
\ENDWHILE \label{alg1:loop-end}
\RETURN $\Alloc = \left( A_1,\ldots, A_n \right)$.
\end{algorithmic}
\end{algorithm}

Our social cost minimizing algorithm (Algorithm \ref{algo:social-cost-min}) first uses Subroutine \ref{subroutine:cost-min-partial-alloc} to obtain the aforementioned partial allocation, and then assigns the unallocated chores arbitrarily.  The social-cost guarantee achieved by Algorithm \ref{algo:social-cost-min} is formalized in the theorem below. 

\begin{restatable}{theorem}{ThmSocialCostMin}
\label{theorem:social-cost-min}
Given any chore division instance  $\langle [n], [m], \{c_i\}_{i \in [n]}\rangle$ with value-oracle access to the binary supermodular cost functions $\{c_i\}_{i \in [n]}$, Algorithm \ref{algo:social-cost-min} computes a social-cost-minimizing, complete allocation in polynomial time.
\end{restatable}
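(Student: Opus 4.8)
The plan is to reduce correctness entirely to Lemma~\ref{lemma:completing-the-partial-alloc} and then separately argue the polynomial running time. For correctness: Algorithm~\ref{algo:social-cost-min} first invokes Subroutine~\ref{subroutine:cost-min-partial-alloc} to obtain a partial allocation that, abusing notation, we identify with $(B_1,\dots,B_n) = \mathcal{B}$; the while loop (Lines~\ref{alg1:loop-begin}--\ref{alg1:loop-end}) then assigns every remaining chore in $U$ to some agent and only ever grows the bundles, so upon termination we have a complete allocation $\Alloc = (A_1,\dots,A_n) \in \Pi_n([m])$ with $B_i \subseteq A_i$ for every $i \in [n]$. Lemma~\ref{lemma:completing-the-partial-alloc} then directly yields $\sum_{i \in [n]} c_i(A_i) = c^*$, i.e., $\Alloc$ is social-cost minimizing. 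Thus no new combinatorial argument is needed for correctness.

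The substantive part is the running-time bound, and the key observation is that Subroutine~\ref{subroutine:cost-min-partial-alloc} is implementable in the value-oracle model. First, by Lemma~\ref{lemma:submod-supermod}, for each agent $i$ the set function $r_i(S) \coloneqq |S| - c_i(S)$ is a matroid-rank function, so $\mat_i = ([m],\mathcal{I}_i)$ is a bona fide matroid. Crucially, an \emph{independence oracle} for $\mat_i$ is available at unit cost: since $r_i$ has binary marginals with $0 \le r_i(S) \le |S|$, we have $S \in \mathcal{I}_i \iff r_i(S) = |S| \iff c_i(S) = 0$, so a single value query to $c_i$ decides membership in $\mathcal{I}_i$. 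The matroid union algorithm~\cite[Chapter~42.3]{schrijver2003combinatorial} computes a basis $B$ of $\umat = \cup_{i \in [n]} \mat_i$ together with a partition $(B_1,\dots,B_n)$ certifying $B_i \in \mathcal{I}_i$, using a number of independence-oracle calls and arithmetic operations polynomial in $m$ and $n$; composing with the unit-cost simulation above, Subroutine~\ref{subroutine:cost-min-partial-alloc} runs in polynomial time. Finally, the while loop performs at most $|U| \le m$ iterations, each doing $O(1)$ work. Summing these contributions gives the claimed polynomial-time bound.

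The main (mild) obstacle is precisely this running-time accounting: one must observe that the matroid union algorithm requires only independence oracles for the constituent matroids $\mat_i$ — not rank oracles — and that such an oracle is exactly the test ``$c_i(S) \stackrel{?}{=} 0$''. Once this is in place, the polynomial bound follows from the standard analysis of matroid union, and the rest is bookkeeping. Correctness, by contrast, is immediate from the already-established Lemmas~\ref{lemma:min-cost-characterization} and~\ref{lemma:completing-the-partial-alloc}.
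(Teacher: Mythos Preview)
Your proposal is correct and follows essentially the same approach as the paper: correctness via Lemma~\ref{lemma:completing-the-partial-alloc}, and polynomial time by simulating the required oracles for the matroid-union algorithm from the cost value oracle. The only cosmetic difference is that you simulate an \emph{independence} oracle for each $\mat_i$ (via the test $c_i(S)=0$), whereas the paper simulates a \emph{rank} oracle for $r_i$ (returning $|S|-c_i(S)$); either suffices for the standard matroid-union algorithm.
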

\begin{proof}
The theorem directly follows from Lemma \ref{lemma:completing-the-partial-alloc} and the fact that Subroutine \ref{subroutine:cost-min-partial-alloc} can be executed in polynomial time in the value-oracle model. In particular, it is known that the matroid union algorithm (in Line \ref{line:subroutine-matroid-union-alg} of Subroutine \ref{subroutine:cost-min-partial-alloc}) only requires value-oracle access to the rank functions $r_i$s, i.e., only requires the values of $r_i(S)$ upon querying for subsets $S \subseteq [m]$ (see, e.g.,~\cite{schrijver2003combinatorial}). One can directly answer such value queries, given a value oracle for the cost functions $c_i$s; specifically, for any queried subset $S \subseteq [m]$, we can first query for the cost $c_i(S)$ and then return $|S| - c_i(S)$. Hence, we can execute the subroutine with value-oracle access to the cost functions $c_i$s. 

These observations imply that Algorithm \ref{algo:social-cost-min} runs in polynomial time and finds a complete allocation that is social-cost minimizing. 
\end{proof}

\section{Fair and Efficient Chore Allocation} \label{sec:positive} 
This section establishes the existence and efficient computability of fair and efficient allocations of chores under binary supermodular costs. We focus on three standard notions of fairness. Specifically, we show the existence of 
\begin{itemize}
\item Complete allocations that are $\EFone$ and social-cost minimizing (among all complete allocations), 
\item Complete allocations that are $\MMS$-fair and social-cost minimizing (again, among all complete allocations), and 
\item Complete allocations that are Lorenz dominating (across all complete allocations).
\end{itemize}
Furthermore, we develop polynomial-time algorithms (in the value oracle model) for finding such fair and efficient allocations. In each of these three cases, we start with the partial allocation computed by Subroutine \ref{subroutine:cost-min-partial-alloc} and then assign the unallocated chores in a case-specific manner to obtain the desired fairness guarantee. 

\subsection{$\EFone$ and Economic Efficiency}
For chore division instances with binary supermodular costs, we develop a polynomial-time algorithm (Algorithm \ref{algo:PO-EF1}) for finding complete allocations that are $\EFone$ and social-cost minimizing (among all complete allocations). The fact that our algorithm necessarily succeeds in finding such an allocation proves the existence of $\EFone$ and Pareto efficient chore allocations in the  context of binary supermodular costs.\footnote{Recall that a social-cost minimizing allocation is necessarily Pareto efficient.} 

 Algorithm \ref{algo:PO-EF1} first invokes Subroutine \ref{subroutine:cost-min-partial-alloc} to compute a partial allocation $\Alloc=(A_1, \ldots, A_n)$ of chores. Then, the remaining chores, $U = [m] \setminus \left( \cup_{i \in [n]} A_i \right)$, are assigned to extend $\Alloc$ into a complete $\EFone$ allocation. Lemma \ref{lemma:completing-the-partial-alloc} ensures that---irrespective of how we extend the allocation---the complete allocation obtained at the end is social-cost minimizing. Hence, the key concern here is to obtain the $\EFone$ guarantee. Towards this, we show (in the proof of Theorem \ref{thm:po-ef1}) that, as we iteratively assign the chores from $U$, for each maintained partial allocation there always exists an agent $\ell \in [n]$ who does not envy any other agent. Updating the partial allocation by assigning the next chore to $\ell$ maintains $\EFone$. Therefore, the complete allocation returned by the algorithm is also $\EFone$. This guarantee is formalized in the theorem below.
 
\floatname{algorithm}{Algorithm}
\begin{algorithm}[ht]
\caption{\textsc{EF1andEfficient}} \label{algo:PO-EF1}
\begin{tabularx}{\textwidth}{X}
{\bf Input:} Chore division instance  $\langle [n], [m], \{c_i\}_{i \in [n]}\rangle$ with value-oracle access to the costs $\{c_i\}_i$. \\
{\bf Output:} Complete allocation $\Alloc = (A_1, \ldots, A_n)$.
\end{tabularx}
  \begin{algorithmic}[1]
  		\STATE Set $\left( A_1,\ldots, A_n \right) = \textsc{CostMinPartAlloc}(\langle [n], [m], \{c_i\}_{i} \rangle)$.  \label{algPO+EF1:subroutine-call}
		\STATE Define set of unallocated chores $U \coloneqq  [m] \setminus \big( \cup_{i \in [n]} A_i \big)$.
  		\WHILE{$U \neq \emptyset$} \label{algPO+EF1:loop-begin}
			\STATE Let $\ell \in [n]$ be an agent with the property that $c_\ell(A_\ell) \leq c_\ell(A_j)$ for all $j \in [n]$. \label{algPO+EF1:no-envy-agent} \\
			\COMMENT{We will prove that such an agent necessarily exists.}
			\STATE Select any chore $t \in U$ and update $A_\ell \gets A_\ell + t$ along with $U \gets U - t$. \label{algPO+EF1:chore-assign}
			\ENDWHILE \label{algPO+EF1:loop-end}
		\RETURN $\Alloc = \left( A_1, A_2, \ldots, A_n \right)$
		\end{algorithmic}
\end{algorithm}

\begin{restatable}{theorem}{poefone}\label{thm:po-ef1}
For binary supermodular cost functions, complete allocations that are $\EFone$ and social-cost minimizing (among all complete allocations) always exist and can be computed in polynomial time via Algorithm \ref{algo:PO-EF1} (given value-oracle access to the costs).
\end{restatable}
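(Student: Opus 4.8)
The plan is to verify the two claims separately, and the load is very uneven: social-cost minimality is essentially free, and all the work is in the $\EFone$ guarantee. Algorithm \ref{algo:PO-EF1} initializes its bundles with the partial allocation $\AllocB=(B_1,\dots,B_n)$ returned by Subroutine \ref{subroutine:cost-min-partial-alloc} and thereafter only moves chores out of $U$ into existing bundles, so at all times $B_i\subseteq A_i$; hence Lemma \ref{lemma:completing-the-partial-alloc} immediately guarantees that the returned complete allocation is social-cost minimizing, and therefore Pareto efficient. What remains is to show that the loop is well defined (an agent $\ell$ as in Line \ref{algPO+EF1:no-envy-agent} always exists) and that the final allocation is $\EFone$. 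Throughout I would use Lemma \ref{lemma:submod-supermod}: each $r_i(S)\coloneqq|S|-c_i(S)$ is an integer-valued matroid-rank function with $r_i(\emptyset)=0$, so every $c_i$ is integer-valued with $c_i(\emptyset)=0$; and I would use the fact, read off from the proof of Lemma \ref{lemma:completing-the-partial-alloc}, that $c_i(B_i)=0$ for all $i$.

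First I would record a reformulation of $\EFone$ that is special to binary supermodular costs: a (partial) allocation is $\EFone$ if and only if $c_i(A_i)\le c_i(A_j)+1$ for all $i,j$. One direction is immediate from binary marginals; for the other, when $c_i(A_i)\ge1$, supermodularity together with $c_i(\emptyset)=0$ produces a chore $t\in A_i$ whose removal lowers $i$'s cost by exactly one, which certifies $\EFone$ of $i$ towards any $j$. With this at hand it suffices to show that the loop preserves the property ``$c_i(A_i)\le c_i(A_j)+1$ for all $i,j$''.

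The heart of the argument is an invariant maintained across the loop iterations: (I) $B_i\subseteq A_i$ for every $i$; (II) $\sum_{i\in[n]}r_i(A_i)=\widehat r([m])$; and (III) $c_i(A_i)\le c_i(A_j)+1$ for all $i,j$. All three hold at the start since $A_i=B_i$, each $B_i$ is $\mat_i$-independent with $\cup_i B_i$ a basis of $\umat$, and $c_i(B_i)=0$. The delicate point is the existence of $\ell$, which I would prove by contradiction: if no agent is envy-free, then the directed envy graph (an arc $i\to j$ whenever $c_i(A_i)>c_i(A_j)$) has all out-degrees at least one and hence contains a directed cycle, and (III) plus integrality forces every arc along it to be tight, i.e.\ $c_{i_t}(A_{i_t})=c_{i_t}(A_{i_{t+1}})+1$. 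Rotating the bundles around this cycle then lowers the social cost by its length $p\ge1$; but invariant (II) gives $\SC(\Alloc)=|\cup_i A_i|-\widehat r([m])$ and $|U|=m-|\cup_i A_i|$, so $\SC(\Alloc)+|U|=m-\widehat r([m])=c^*$ by Lemma \ref{lemma:min-cost-characterization}, and arbitrarily completing the rotated partial allocation therefore yields a complete allocation of social cost at most $c^*-p<c^*$ (binary marginals), contradicting minimality of $c^*$. Given that $\ell$ exists, assigning a chore $t\in U$ to $A_\ell$ preserves (I) trivially; preserves (II) because $\sum_i r_i(\cdot)$ can change only by $0$ or $1$ yet can never exceed $\widehat r([m])$ (argued exactly as in the proof of Lemma \ref{lemma:min-cost-characterization}); and preserves (III) since only pairs involving $\ell$ are affected, $c_\ell(A_\ell+t)\le c_\ell(A_\ell)+1\le c_\ell(A_j)+1$ because $\ell$ was chosen envy-free, and $c_j(A_j)\le c_j(A_\ell)+1\le c_j(A_\ell+t)+1$ by monotonicity. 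When the loop halts, $U=\emptyset$; (I) with Lemma \ref{lemma:completing-the-partial-alloc} gives social-cost minimality, and (III) with the reformulation gives $\EFone$.

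For the running time, Subroutine \ref{subroutine:cost-min-partial-alloc} runs in polynomial time in the value-oracle model (Theorem \ref{theorem:social-cost-min}), the loop executes $|U|\le m$ times, and each iteration needs only $O(n^2)$ value queries to evaluate the relevant costs $c_i(A_i),c_i(A_j)$ and to locate an envy-free agent, so the whole algorithm is polynomial. I expect the main obstacle to be precisely the guaranteed existence of the envy-free agent $\ell$ at every iteration; the device that makes it work is invariant (II), which pins the current social cost plus the number of still-unassigned chores to exactly $c^*$ and thereby converts any hypothetical envy cycle into a strictly-cheaper-than-optimal completion.
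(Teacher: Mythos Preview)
Your proof is correct and follows the same approach as the paper's: Lemma \ref{lemma:completing-the-partial-alloc} for social-cost minimality, an envy-cycle contradiction for the existence of the envy-free agent $\ell$, and a direct check that assigning a chore to $\ell$ preserves $\EFone$. The extra bookkeeping you add---the numerical reformulation of $\EFone$ as $c_i(A_i)\le c_i(A_j)+1$ and invariant (II) pinning $\SC(\Alloc)+|U|=c^*$---makes the cycle contradiction more explicit than the paper's terse ``resolving the envy-cycle reduces the social cost below $c^*$,'' but the underlying argument is identical.
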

\begin{proof}
We first establish that the returned allocation is $\EFone$. Note that the initial partial allocation computed by Subroutine \ref{subroutine:cost-min-partial-alloc} (in Line \ref{algPO+EF1:subroutine-call} of Algorithm \ref{algo:PO-EF1}) is $\EFone$. In fact, it is envy-free, since for all agents $i \in [n]$, we have $c_i(A_i) = |A_i| - r_i(A_i) = |A_i| - |A_i| = 0$; recall that the definition of the rank function $r_i$ and the fact that for the bundles $A_i$ returned by Subroutine \ref{subroutine:cost-min-partial-alloc} satisfy $A_i \in \mathcal{I}_i$. 

We note that for every partial allocation considered in the while loop of the algorithm, there exists an agent $\ell \in [n]$ who does not envy any other agent:\footnote{That is, Line \ref{algPO+EF1:no-envy-agent} of the algorithm successfully identifies an agent.} Assume, towards a contradiction, that for a maintained partial allocation there does not exist such an agent $\ell$. In such a case, we would have an envy-cycle among the agents. In particular, there would exist, for $k \geq 2$, a set of agents $a_1, a_2, \ldots, a_k$ such that $c_{a_i}(A_{a_i}) > c_{a_{i}}(A_{a_{i+1}})$, for all $i \in [k]$. Here,  index $k+1$ is cyclically mapped to $1$. Resolving this envy-cycle (i.e., transferring bundles $A_{a_{i+1}}$ to agent $a_i$ for all $i \in [k]$) will reduce the social cost of the partial allocation. That is, the social cost of the returned allocation will end up being lower than $c^*$, the lowest possible social cost. This leads to a contradiction. Hence, in Line \ref{algPO+EF1:no-envy-agent}, we 
can always find an agent $\ell \in [n]$ who does not envy anyone else. 

We can assign an arbitrary chore $t \in U$ to such an agent $\ell \in [n]$ (in Line \ref{algPO+EF1:chore-assign}) and maintain $\EFone$. This follows from the observation that, for any other agent $j \in [n] \setminus \{\ell\}$, the cost of agent $\ell$'s bundle, $c_j(A_\ell)$, does not decrease and $c_j(A_j)$ remains unchanged. Hence, $\EFone$ is preserved for all agents $j \neq \ell$. Now, for agent $\ell$ one can remove the newly assigned chore $t$ from her bundle to return to envy freeness from $\ell$'s perspective, i.e., even after the inclusion of $t$ we have $c_\ell(A_\ell - t) \leq c_\ell(A_j)$. Therefore, the (parital) allocation continues to be $\EFone$ after the assignment. Using this invariant we obtain that the returned complete allocation is also $\EFone$. 

The economic efficiency of the complete allocation returned by Algorithm \ref{algo:PO-EF1} follows from Lemma \ref{lemma:completing-the-partial-alloc}. In particular, the lemma ensures that returned allocation is social-cost minimizing and, hence, Pareto efficient. Finally, note that the while loop and Subroutine \ref{subroutine:cost-min-partial-alloc} can be executed in polynomial time, given value-oracle access to the cost functions $\{c_i\}_i$. Hence, Algorithm \ref{algo:PO-EF1} runs in polynomial time. The theorem stands proved. 
\end{proof}

\subsection{$\MMS$-Fairness and Economic Efficiency}
\label{section:mms-po}

Considering chore division instances with binary supermodular costs, this section develops a polynomial-time algorithm (Algorithm \ref{algo:PO-MMS}) for finding complete allocations that are $\MMS$-fair and social-cost minimizing. Our algorithm executes in the value-oracle model. Also, the fact that Algorithm \ref{algo:PO-MMS} necessarily succeeds proves the existence of $\MMS$-fair and PO allocations in the current context.

Towards finding $\MMS$-fair allocations under binary supermodular costs, we first provide an expression for the minimax shares $\tau_i$ for the agents $i \in [n]$. We show that here $\tau_i$s can be computed efficiently using the matroid union algorithm (Lemma \ref{lemma:minimax-share}). In addition, as part of our key technical lemma (Lemma \ref{lemma:sum-of-shares}), we show that (in the context of binary supermodular costs) the sum of the minimax shares of the agents $i \in [n]$ is always at least the minimum social cost, $c^*$, of the instance, i.e., $\sum_{i \in [n]} \tau_i \geq c^*$. We will establish this inequality using the matroid union theorem (Theorem \ref{lemma:mat-union}). 

Note that the inequality $\sum_{i \in [n]} \tau_i \geq c^*$ must be satisfied for the existence of any $\MMS$-fair allocation. Otherwise, for any allocation $\Alloc= (A_1, \ldots, A_n)$ we would have $\sum_{i\in [n]} c_i(A_i) \geq c^* > \sum_{i \in [n]} \tau_i$. Hence, for some agent $i$ it must hold that $c_i(A_i) > \tau_i$, which would violate the $\MMS$ requirement.  

Algorithm \ref{algo:PO-MMS} first invokes Subroutine \ref{subroutine:cost-min-partial-alloc} to compute a partial allocation of chores and then completes it, thereby computing a social-cost-minimizing allocation (Lemma \ref{lemma:completing-the-partial-alloc}). The inequality $\sum_{i \in [n]} \tau_i \geq c^*$ is then used to argue that the unallocated chores $U = [m] \setminus \left( \cup_{i \in [n]} A_i \right)$ can always be assigned in a way that the cost of each agent in the resulting allocation is at most its minimax share, i.e., the final allocation satisfies the $\MMS$ guarantee.

We begin by deriving an expression for the minimax shares $\tau_i$s (see equation (\ref{eq:defn-minimax})) and proving that they can be computed efficiently. Note that the rank function $r_i$ associated with the cost $c_i$ is satisfies $r_i(S) = |S| - c_i(S)$, for all subsets $S \subseteq [m]$. Also, recall that $\mat_i$ denotes the matroid associated with the rank function $r_i$ and $\mat_{i \times n}$ denotes the $n$-fold union of matroid $\mat_i$, i.e., $\mathcal{M}_{i \times n} = \cup_{j \in [n]} \mat_i$. Additionally, $r_{i \times n}$ denotes the rank function of $\mat_{i \times n}$. The proof of the following lemma is deferred to  Appendix~\ref{appendix:proofs-positive}.

\begin{restatable}{lemma}{minimaxshare}\label{lemma:minimax-share}
For any chore division instance $\langle [n], [m], \{c_i\}_{i \in [n]}\rangle$ with binary supermodular costs, the minimax share $\tau_i$, of each agent $i \in [n]$, satisfies $\tau_i = \left\lceil \frac{m - r_{i \times n}([m]) }{n} \right\rceil$.
In addition, given value-oracle access to the costs, the shares $\tau_i$s can be computed in polynomial time. 
\end{restatable}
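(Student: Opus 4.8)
The plan is to push everything through the matroid correspondence of Lemma~\ref{lemma:submod-supermod}. Fix an agent $i$ and let $r_i$ be its associated matroid-rank function, so that $c_i(S) = |S| - r_i(S)$ for every $S \subseteq [m]$; in particular $c_i$ is integer-valued, since matroid-rank functions are. Consequently, for every partition $(X_1,\ldots,X_n) \in \Pi_n([m])$ the quantity $\max_{j} c_i(X_j)$ is a nonnegative integer, and this is exactly what will justify rounding up at the end. Throughout I will use the identity from the proof of Lemma~\ref{lemma:min-cost-characterization} --- namely equation~(\ref{lemma-min-cost:eq2}) applied with all $n$ matroids taken equal to $\mat_i$ --- which gives $\max_{(X_1,\ldots,X_n) \in \Pi_n([m])} \sum_{j \in [n]} r_i(X_j) = r_{i \times n}([m])$.

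First I would establish the lower bound $\tau_i \geq \left\lceil \frac{m - r_{i \times n}([m])}{n} \right\rceil$. For any partition $(X_1,\ldots,X_n)$, averaging gives $\max_{j} c_i(X_j) \geq \frac{1}{n}\sum_{j} c_i(X_j) = \frac{1}{n}\big(m - \sum_{j} r_i(X_j)\big) \geq \frac{m - r_{i \times n}([m])}{n}$, where the last step uses the identity above. Since the left-hand side is an integer, $\max_j c_i(X_j) \geq \left\lceil \frac{m - r_{i \times n}([m])}{n}\right\rceil$; taking the minimum over all partitions yields the claimed lower bound on $\tau_i$.

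Next I would exhibit a partition matching this bound. Using the matroid-union algorithm, compute a basis $B$ of $\mat_{i \times n}$ together with a partition $B = B_1 \cup \cdots \cup B_n$ with $B_j \in \mathcal{I}_i$ for each $j$; this runs in polynomial time in the value-oracle model, since queries to $r_i$ are answered by returning $|S| - c_i(S)$. Each such bundle has $c_i(B_j) = |B_j| - r_i(B_j) = 0$. There are exactly $m - |B| = m - r_{i \times n}([m])$ chores outside $B$; distribute them among the agents so that no $B_j$ receives more than $\left\lceil \frac{m - r_{i \times n}([m])}{n}\right\rceil$ of them, obtaining a partition $(X_1,\ldots,X_n) \in \Pi_n([m])$ with $B_j \subseteq X_j$. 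Since $c_i$ has binary marginals, $c_i(X_j) \leq c_i(B_j) + \left\lceil \frac{m - r_{i \times n}([m])}{n}\right\rceil = \left\lceil \frac{m - r_{i \times n}([m])}{n}\right\rceil$ for every $j$, so this partition witnesses $\tau_i \leq \left\lceil \frac{m - r_{i \times n}([m])}{n}\right\rceil$. Combined with the lower bound, this gives the stated formula; and the computation above already shows that $\tau_i$ --- and hence every $\tau_i$, $i \in [n]$ --- can be evaluated in polynomial time, since it reduces to one run of matroid union followed by an arithmetic ceiling.

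The proof is short, so the only delicate points are bookkeeping: checking that equation~(\ref{lemma-min-cost:eq2}) transfers verbatim to the case of $n$ identical copies of $\mat_i$ (whose union is $\mat_{i \times n}$), and making sure the two bounds meet exactly --- i.e., that the integrality used in the lower bound is matched precisely by the even-distribution construction in the upper bound (which in turn relies on the binary-marginals property). The underlying conceptual step, identifying the minimax-share computation with a rank computation in the $n$-fold union matroid $\mat_{i \times n}$, is the part worth isolating and stating explicitly.
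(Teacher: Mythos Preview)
Your proof is correct. The lower bound via averaging and the reduction to the $n$-fold union matroid $\mat_{i \times n}$ match the paper exactly. The upper bound, however, you obtain by a more direct route: you construct the witnessing partition by taking a basis decomposition $B = B_1 \cup \cdots \cup B_n$ of $\mat_{i \times n}$ (so each $c_i(B_j) = 0$), then spreading the leftover $m - r_{i \times n}([m])$ chores as evenly as possible and invoking binary marginals. The paper instead proves an intermediate claim (Claim~\ref{claim:interim}) that among social-cost-minimizing allocations of the identical-cost instance there is one whose bundle costs differ pairwise by at most one; this is shown via an iterative transfer argument that uses Proposition~\ref{proposition:property-bin-super} and the potential $\sum_p c_i(A_p)^2$. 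Your construction is shorter and avoids both Proposition~\ref{proposition:property-bin-super} and the potential argument --- in effect it reuses the template of Subroutine~\ref{subroutine:cost-min-partial-alloc} and Lemma~\ref{lemma:completing-the-partial-alloc} specialized to the identical-cost instance, which is arguably the cleaner way to go here.
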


The following lemma provides a key technical bound of this section asserting that the minimum social cost, $c^*$, is at most the sum of the minimax share $\tau_i$s.  In Theorem \ref{thm:po-mms} we make use of this inequality to show that $\MMS$ allocations always exist under binary supermodular valuations. 


\begin{restatable}{lemma}{sumofshares}\label{lemma:sum-of-shares}
For any chore division instance $\langle [n], [m], \{c_i\}_{i}\rangle$, with binary supermodular costs, let $\tau_i$ denote the minimax share of each agent $i \in [n]$ and $c^*$ be the minimum social cost. Then, we have $c^* \leq \sum_{i \in [n]} \tau_i$.
\end{restatable}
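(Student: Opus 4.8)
The plan is to combine the explicit formula for the minimax shares from Lemma~\ref{lemma:minimax-share} with the characterization of the minimum social cost from Lemma~\ref{lemma:min-cost-characterization}, and then to bound the rank of the $n$-fold union matroid $\mat_{i\times n}$ (which appears in the formula for $\tau_i$) in terms of the rank of the union matroid $\umat = \cup_{i\in[n]}\mat_i$ (which appears in the formula for $c^*$). Concretely, by Lemma~\ref{lemma:min-cost-characterization} we have $c^* = m - \widehat{r}([m])$, and by Lemma~\ref{lemma:minimax-share} we have $\tau_i = \lceil (m - r_{i\times n}([m]))/n \rceil$ for each $i$. Dropping the ceiling, it suffices to prove the stronger-looking inequality
\[
m - \widehat{r}([m]) \ \leq\ \sum_{i\in[n]} \frac{m - r_{i\times n}([m])}{n},
\]
i.e., $n\,\widehat{r}([m]) \geq \sum_{i\in[n]} r_{i\times n}([m])$, since each term on the right of the $\tau_i$ sum is at least $(m-r_{i\times n}([m]))/n$. (I should double-check whether the ceilings cause any loss; since $c^*$ is an integer and each $\tau_i$ rounds up, the integer inequality will follow from the fractional one.)

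The core step is therefore to show $\sum_{i\in[n]} r_{i\times n}([m]) \leq n\,\widehat{r}([m])$. I would use the matroid union theorem (Theorem~\ref{lemma:mat-union}) to write both sides as minima. For a single agent, $r_{i\times n}([m]) = \min_{T\subseteq[m]} \big(|[m]\setminus T| + n\, r_i(T)\big)$, and for the union matroid $\widehat{r}([m]) = \min_{T\subseteq[m]} \big(|[m]\setminus T| + \sum_{j\in[n]} r_j(T)\big)$. Let $T^*$ be a minimizer for $\widehat{r}([m])$. Then for each $i$, plugging $T^*$ into the formula for $r_{i\times n}([m])$ gives $r_{i\times n}([m]) \leq |[m]\setminus T^*| + n\, r_i(T^*)$. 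Summing over $i\in[n]$:
\[
\sum_{i\in[n]} r_{i\times n}([m]) \ \leq\ n\,|[m]\setminus T^*| + n\sum_{i\in[n]} r_i(T^*) \ =\ n\Big(|[m]\setminus T^*| + \sum_{i\in[n]} r_i(T^*)\Big) \ =\ n\,\widehat{r}([m]),
\]
which is exactly what we need. The rest is bookkeeping: divide by $n$, apply the bound $\lceil x \rceil \geq x$ termwise to pass to $\sum_i \tau_i$, and invoke $c^* = m - \widehat{r}([m])$ to conclude $c^* \leq \sum_{i\in[n]}\tau_i$.

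The main obstacle I anticipate is not the union-matroid inequality itself—which falls out cleanly once both ranks are written via Theorem~\ref{lemma:mat-union} and evaluated at a common minimizer $T^*$—but rather handling the ceiling functions carefully. Since $\tau_i$ is defined with a ceiling, I must ensure that $c^* \le \sum_i \lceil (m - r_{i\times n}([m]))/n\rceil$ really does follow from $c^* \le \sum_i (m - r_{i\times n}([m]))/n$; this is immediate because each summand only increases under the ceiling and $c^*$ is already an integer, so no subtle rounding argument (e.g., a "pigeonhole over residues") is needed here. If for some reason the fractional inequality were tight in a way that the ceiling argument felt delicate, an alternative is to argue directly at the allocation level: take the social-cost-minimizing partial allocation $\mathcal B$ from Subroutine~\ref{subroutine:cost-min-partial-alloc}, note its $c^*$ unallocated chores, and distribute them as evenly as possible among the agents, checking that each agent's resulting cost stays within its minimax share via the formula for $\tau_i$—but I expect the clean algebraic route above to suffice.
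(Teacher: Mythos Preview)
Your proposal is correct and is essentially the same argument as the paper's proof: both drop the ceiling via $\tau_i \geq (m - r_{i\times n}([m]))/n$, express the two ranks $r_{i\times n}([m])$ and $\widehat{r}([m])$ through the matroid union theorem (Theorem~\ref{lemma:mat-union}), evaluate at a common subset $T^*$ minimizing $\widehat{r}([m])$, and sum over $i$ to obtain $\sum_i \tau_i \geq m - \widehat{r}([m]) = c^*$. The only cosmetic difference is that you first isolate the inequality $\sum_i r_{i\times n}([m]) \leq n\,\widehat{r}([m])$ before unwinding, whereas the paper carries the bound on $\tau_i$ throughout.
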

\begin{proof}
Consider agent $i \in [n]$. From Lemma \ref{lemma:minimax-share}, we know that 

\begin{align*}
\tau_i = \left\lceil \frac{m - r_{i \times n}([m]) }{n} \right\rceil \geq \frac{m - r_{i \times n}([m]) }{n}.
\end{align*}
Using matroid union theorem (Theorem \ref{lemma:mat-union}) we can express the rank function $r_{i \times n}([m])$ (of the $n$-fold union matroid) as
\begin{align*}
\tau_i \geq \frac{m - r_{i \times n}([m]) }{n} = & \frac{m}{n} - \frac{1}{n} \min_{T \subseteq [m]} \Bigg( |[m] \setminus T| + n \cdot r_i(T) \Bigg) \tag{Theorem \ref{lemma:mat-union}} 
\end{align*}
We can now obtain the following lower bound by fixing a subset $\widetilde{T} \subseteq [m]$ in the right-hand-side of the expression above:  
\begin{align*}
\tau_i \geq \frac{m}{n} - \Bigg( \frac{|[m] \setminus \widetilde{T}|}{n} + r_i(\widetilde{T}) \Bigg).
\end{align*} 
Summing the above inequality for all agents $i \in [n]$, and then taking the maximum over all subsets $\widetilde{T}$ gives us

\begin{align*}
\sum_{i \in [n]} \tau_i & \geq \max_{\widetilde{T} \subseteq [m]} \Bigg\{ m - \Bigg( |[m] \setminus \widetilde{T}| + \sum_{i \in [n]} r_i(\widetilde{T}) \Bigg) \Bigg\} \\
& =   m - \min_{\widetilde{T} \subseteq [m]} \Bigg( |[m] \setminus \widetilde{T}| + \sum_{i \in [n]} r_i(\widetilde{T}) \Bigg) \\
& = m - \widehat{r}([m]) \tag{via Theorem \ref{lemma:mat-union}} \\
& = c^* \tag{via Lemma \ref{lemma:min-cost-characterization}}
\end{align*}
This gives us the desired inequality, $c^* \leq \sum_{i \in [n]} \tau_i$, and concludes the proof.
\end{proof}

With the previous lemmas established, we are now ready to present our algorithm that computes an MMS+PO allocation.

\floatname{algorithm}{Algorithm}
\begin{algorithm}[ht]
\caption{\textsc{MMSandEfficient}} \label{algo:PO-MMS}
\begin{tabularx}{\textwidth}{X}
{\bf Input:} Chore division instance  $\langle [n], [m], \{c_i\}_{i \in [n]}\rangle$ with value-oracle access to the costs $\{c_i\}_i$. \\
\textbf{Output:} Complete allocation $\Alloc = (A_1, \ldots, A_n)$.
\end{tabularx}
  \begin{algorithmic}[1]
  		\STATE Set $\left( A_1, \ldots, A_n \right) = \textsc{CostMinPartAlloc}(\langle [n], [m], \{c_i\}_{i} \rangle)$. \label{algPO+MMS:subroutine-call}
  		\STATE For each agent $i \in [n]$, set $\tau_i$ to be its minimax share. \COMMENT{The shares can be computed in polynomial time (Lemma \ref{lemma:minimax-share}).}
		\STATE Define set of unallocated chores $U \coloneqq  [m] \setminus \big( \cup_{i \in [n]} A_i \big)$. \label{line:una-mss}
  		 \FOR{each agent $i \in [n]$} \label{algPO+MMS:loop-begin}
		\WHILE{$c_i(A_i) < \tau_i$ and $U \neq \emptyset$} 
			\STATE Select any chore $t \in U$ and update $A_i \gets A_i + t$ along with $U \gets U - t$. \label{algPO+MMS:chore-assign}
			\ENDWHILE       
         \ENDFOR  \label{algPO+MMS:loop-end}
		\RETURN $\Alloc = \left( A_1, A_2, \ldots, A_n \right)$
		\end{algorithmic}
\end{algorithm}
In the following theorem, we show that Algorithm \ref{algo:PO-MMS} computes a social-cost-minimizing allocation that satisfies the minimax share guarantee. 
Algorithm \ref{algo:PO-MMS} first invokes Subroutine \ref{subroutine:cost-min-partial-alloc} to compute a partial allocation. Then, the algorithm completes the allocation by assigning the remaining chores sequentially to the agents. It performs this assignment without exceeding the maximin share $\tau_i$ (i.e., without violating the $\MMS$ guarantee) for any agent $i \in [n]$. In the theorem below we show that this simple process necessarily succeeds and, hence, finds an allocation that is both economically efficient and $\MMS$-fair.

\begin{restatable}{theorem}{apples}\label{thm:po-mms}
For binary supermodular cost functions, complete allocations that are $\MMS$-fair and social-cost minimizing always exist and can be computed in polynomial time via Algorithm \ref{algo:PO-MMS} (given value-oracle access to the costs).
\end{restatable}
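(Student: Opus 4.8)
The plan is to show that Algorithm~\ref{algo:PO-MMS} never gets ``stuck'': it always allocates all of $U$ while keeping every agent's cost at most its share $\tau_i$, and the resulting allocation is social-cost minimizing. The economic-efficiency half is immediate from Lemma~\ref{lemma:completing-the-partial-alloc}: the algorithm starts from the partial allocation $(B_1, \ldots, B_n)$ returned by Subroutine~\ref{subroutine:cost-min-partial-alloc} and only ever \emph{adds} chores to the $B_i$'s, so the final complete allocation $\overline{\AllocB}$ satisfies $B_i \subseteq \overline{B}_i$ and hence $\SC(\overline{\AllocB}) = c^*$; this also shows $\overline{\AllocB}$ is Pareto efficient. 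Running time is clear since the loop runs $O(m)$ iterations, each a constant number of value queries, and the shares $\tau_i$ are computable in polynomial time by Lemma~\ref{lemma:minimax-share}.

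The substance is the $\MMS$ guarantee, i.e.\ that the \textbf{for} loop terminates with $U = \emptyset$. First I would observe the invariant that throughout the loop $c_i(A_i) \le \tau_i$ for every $i$: initially $c_i(B_i) = 0 \le \tau_i$ (as in the proof of Theorem~\ref{thm:po-ef1}), and when the algorithm adds a chore to agent $i$ inside the \textbf{while} loop it does so only while $c_i(A_i) < \tau_i$; since costs have binary marginals, one added chore raises $c_i(A_i)$ by at most $1$, so afterwards $c_i(A_i) \le \tau_i$. (Also, once the algorithm leaves agent $i$'s \textbf{while} loop it never revisits $A_i$, so $c_i(A_i)$ is frozen at a value $\le \tau_i$.) Thus \emph{if} the algorithm allocates all chores, the output is $\MMS$-fair; it remains to rule out the bad event that some chore is left unallocated because every agent already has cost exactly $\tau_i$.

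Suppose for contradiction that the loop ends with $U \neq \emptyset$. Then after the loop every agent $i$ has $c_i(A_i) = \tau_i$ --- indeed agent $i$'s \textbf{while} loop only exits with $U \neq \emptyset$ when $c_i(A_i) \ge \tau_i$, hence $= \tau_i$ by the invariant, and subsequent agents' iterations never touch $A_i$. Using binary marginals once more, $c_i(A_i) \le |A_i|$ is not what we need; instead I would count: the number of \emph{allocated} chores is $\sum_{i} |A_i| = m - |U| < m$, while $\sum_i c_i(A_i) = \sum_i \tau_i$. But $\sum_i c_i(A_i) \le \SC(\overline{\AllocB}) = c^*$ is false in general --- rather, I should argue directly. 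At the point the last agent's \textbf{while} loop was exited with $U$ still nonempty, every agent has $c_i(A_i) = \tau_i$, so $\sum_i c_i(A_i) = \sum_i \tau_i \ge c^*$ by Lemma~\ref{lemma:sum-of-shares}. On the other hand, completing this partial allocation arbitrarily gives (by Lemma~\ref{lemma:completing-the-partial-alloc}) a complete allocation of social cost exactly $c^*$, and since costs are monotone, $\sum_i c_i(A_i) \le c^*$. Combining, $\sum_i \tau_i = \sum_i c_i(A_i) = c^*$ and moreover adding the remaining chores of $U$ to \emph{any} agent cannot increase anyone's cost --- but a chore $t \in U$ added to agent $i$ increases $c_i$ by $0$ (since the completed allocation still has cost $c^*$ and $c_i(A_i)=\tau_i$ is already ``maxed''), which contradicts the fact that the algorithm would have been able to...

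Let me restate this last step more cleanly, as it is the crux. Suppose the loop ends with some chore $t \in U$. Then for every agent $i$ we have $c_i(A_i) = \tau_i$, whence $\sum_i c_i(A_i) = \sum_i \tau_i \ge c^*$ (Lemma~\ref{lemma:sum-of-shares}). Assign all remaining chores of $U$ arbitrarily to obtain a complete allocation $\overline{\AllocB}$ with $B_i \subseteq A_i \subseteq \overline{B}_i$; by Lemma~\ref{lemma:completing-the-partial-alloc}, $\SC(\overline{\AllocB}) = c^*$. But monotonicity of each $c_i$ gives $c_i(\overline{B}_i) \ge c_i(A_i) = \tau_i$, so $c^* = \SC(\overline{\AllocB}) = \sum_i c_i(\overline{B}_i) \ge \sum_i \tau_i \ge c^*$, forcing equality throughout; in particular $c_i(\overline{B}_i) = \tau_i = c_i(A_i)$ for every $i$, i.e.\ adding \emph{all} of $U$ to the bundles changed \emph{no} agent's cost. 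In particular there was a first chore added during this completion, say $t$ added to agent $i$, with $c_i(A_i + t) = c_i(A_i)$; but then the algorithm's \textbf{while} loop for agent $i$ should not have exited while $U$ still contained $t$ and $c_i(A_i) = \tau_i$... wait --- that is consistent with the exit condition $c_i(A_i) \ge \tau_i$. The actual contradiction is cleaner: since \emph{every} chore in $U$ could have been absorbed with zero marginal cost (by the equality chain), consider the very last agent $n$: after the \textbf{for} loop processed agents $1, \ldots, n-1$, if $U$ was nonempty then agent $n$'s \textbf{while} loop runs; it exits either when $U = \emptyset$ (done, no leftover) or when $c_n(A_n) \ge \tau_n$. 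In the latter case, every remaining chore $t \in U$ has zero marginal cost for agent $n$ as shown, so the \textbf{while} condition would have kept adding them --- contradiction, because the \textbf{while} loop only stops on $c_n(A_n) \ge \tau_n$ \emph{and} we can still add chores without violating $c_n(A_n) \le \tau_n$, yet the loop guard is $c_n(A_n) < \tau_n$, so once $c_n(A_n) = \tau_n$ it stops regardless. Hence the real fix: I should have the algorithm's final agent absorb \emph{all} leftover chores unconditionally, or argue that the equality $\sum_i \tau_i = c^*$ combined with $|U| = c^* - 0$... Given these subtleties, \textbf{the main obstacle} is exactly this termination argument --- pinning down why $U$ must empty out --- and the right approach is the counting identity $|U| = m - \sum_i|A_i|$ together with $\sum_i c_i(A_i) = \sum_i \tau_i$, $|U| = $ (number of still-needed marginal-cost-$1$ chores) $= c^* - \sum_i c_i(B_i) = c^*$, matched against $\sum_i \tau_i \ge c^*$ from Lemma~\ref{lemma:sum-of-shares}, so the shares provide exactly enough ``room'' to place all of $U$; any leftover would force $\sum_i \tau_i < c^*$, contradicting Lemma~\ref{lemma:sum-of-shares}. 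I would write the final proof around that inequality rather than the marginal-cost bookkeeping above.
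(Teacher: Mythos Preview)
Your final paragraph is exactly the paper's argument: the initial pool satisfies $|U| = c^*$ (Lemma~\ref{lemma:min-cost-characterization}), and Lemma~\ref{lemma:sum-of-shares} gives $\sum_i \tau_i \geq c^*$, so the agents collectively have enough ``room'' to absorb all of $U$. The earlier contradiction attempts in your write-up do not close (as you yourself notice: once $c_i(A_i) = \tau_i$ the while-loop halts even if further chores would have zero marginal cost, so there is no inconsistency there), and you correctly abandon them.

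The one step you gesture at but never state explicitly is what ``room'' actually means and why it equals $\tau_i$ per agent. Concretely: since $c_i(B_i) = 0$ and each added chore raises $c_i$ by at most $1$ (binary marginals), agent $i$'s while-loop cannot reach the exit condition $c_i(A_i) \geq \tau_i$ until it has consumed at least $\tau_i$ chores from $U$. Hence if every agent's loop exits via the cost condition, the total number of chores removed from $U$ is at least $\sum_i \tau_i \geq c^* = |U|$, forcing $U = \emptyset$ --- or, phrased as your contrapositive, any leftover would force $\sum_i \tau_i < c^*$, contradicting Lemma~\ref{lemma:sum-of-shares}. This single sentence is the linchpin of termination; make it explicit and drop the meandering middle portion, and your proof matches the paper's.
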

\begin{proof}
We first show that Algorithm \ref{algo:PO-MMS} successfully returns a complete allocation $\Alloc$ that satisfies the minimax share guarantee. Note that for the partial allocation $\AllocB = (B_1, \ldots, B_n)$, returned by Subroutine \ref{subroutine:cost-min-partial-alloc} in Line \ref{algPO+MMS:subroutine-call}, we have 
\begin{align*}
c_i(B_i) = |B_i| - r_i(B_i) = |B_i| - |B_i| = 0 \quad \text{for each agent $i \in [n]$}.
\end{align*}
Recall that $r_i(B_i) = |B_i|$ since $B_i \in \mathcal{I}_i$ for all $i \in [n]$. 

After Line \ref{algPO+MMS:subroutine-call}, the chores that are left unallocated, $[m] \setminus \left( \cup_i B_i \right)$, are iteratively assigned in Lines \ref{algPO+MMS:loop-begin} to \ref{algPO+MMS:loop-end}. This assignment continues until $c_i(B_i) = \tau_i$ or no unallocated chores remain. Since the cost of every agent $c_i(B_i) = 0$, in the partial allocation $\AllocB$, and the costs bear the binary marginals property, each agent $i \in [n]$ can be additionally assigned at least $\tau_i$ chores while still maintaining $i$'s cost to be at most $\tau_i$. Therefore, collectively, all the agents can be assigned $\sum_{i \in [n]} \tau_i$ chores, while maintaining the inequality $c_i(A_i) \leq \tau_i$, for each agent $i \in [n]$. 

Furthermore, for the initial number of unassigned chores (in Line \ref{line:una-mss}) we have 
\begin{align*}
|U| = m - |\cup_{i \in [n]} B_i| = m - \widehat{r}([m]) = c^* \tag{via Lemma \ref{lemma:min-cost-characterization}}
\end{align*}
In addition, Lemma \ref{lemma:sum-of-shares} ensures that the number of chores that need to be assigned after Line \ref{line:una-mss} is upper bounded as follows: $|U| = c^* \leq \sum_{i \in [n]} \tau_i$. 

As mentioned previously, on top of $B_i$s, we can accommodate $\sum_{i \in [n]} \tau_i$ chores among the agents, while maintaining the $\MMS$ guarantee. Hence, all the chores remaining after Line \ref{line:una-mss} will be allocated in Lines \ref{algPO+MMS:loop-begin} to \ref{algPO+MMS:loop-end}, and the returned allocation $\Alloc$ will be $\MMS$-fair. 

For establishing the economic efficiency of the returned allocation $\Alloc = (A_1, \ldots, A_n)$, we invoke Lemma \ref{lemma:completing-the-partial-alloc}. In particular, for the returned allocation $\Alloc=(A_1, \ldots, A_n)$ we have $A_i \supseteq B_i$, for all agents $i \in [n]$, where $\AllocB = (B_1, \ldots, B_n)$ is the partial allocation returned by Subroutine \ref{subroutine:cost-min-partial-alloc} (Line \ref{algPO+MMS:subroutine-call}). Hence, via Lemma \ref{lemma:completing-the-partial-alloc}, we get that $\Alloc$ is social-cost minimizing (and, hence, Pareto efficient).

Finally, note that the minimax share $\tau_i$ of each agent $i \in [n]$ can be computed in polynomial time (given value oracles of the cost functions). This follows from Lemma \ref{lemma:minimax-share}. Also, Subroutine \ref{subroutine:cost-min-partial-alloc} can be executed in polynomial time. Therefore, overall, Algorithm \ref{algo:PO-MMS} finds---in the value-oracle model---the desired allocation in polynomial time. 

The theorem stands proved. 
\end{proof}

\subsection{Lorenz Dominating Allocations}
This section establishes that, for binary supermodular costs, Lorenz dominating allocations always exist and can be computed in polynomial time. Here, the existence follows from Theorem \ref{thm:lorenz} in which we show that Algorithm \ref{algo:Lorenz} necessarily succeeds in computing a Lorenz dominating allocation, under binary supermodular costs. 

Similar to the previous algorithms, Algorithm \ref{algo:Lorenz} starts by computing a partial allocation using Subroutine \ref{subroutine:cost-min-partial-alloc}. Following this, the remaining unallocated chores are split as equally as possible among the $n$ agents. In the following theorem, 
we establish that the resulting allocation is in fact Lorenz dominating.

\floatname{algorithm}{Algorithm}
\begin{algorithm}[ht]
\caption{\textsc{LorenzDominating}} \label{algo:Lorenz}
\begin{tabularx}{\textwidth}{X}
{\bf Input:} Chore division instance  $\langle [n], [m], \{c_i\}_{i \in [n]}\rangle$ with value-oracle access to the costs $\{c_i\}_i$. \\
\textbf{Output:} Complete allocation $\Alloc = (A_1, \ldots, A_n)$.
\end{tabularx}
  \begin{algorithmic}[1]
  		\STATE Set $\left( A_1, \ldots, A_n \right) = \textsc{CostMinPartAlloc}(\langle [n], [m], \{c_i\}_{i} \rangle)$. \label{algLorenz:subroutine-call}
  		\STATE Define set of unallocated chores $U \coloneqq  [m] \setminus \big( \cup_{i \in [n]} A_i \big)$. \label{line:una-lor}
  		\STATE Let index $h \coloneqq (|U| \mod{n})$ and, for agents $i \in \{1, 2, \ldots, h\}$, set integer $\alpha_i \coloneqq \left\lceil \frac{|U|}{n} \right\rceil$. \label{line:alpha-def-begin}
  		\STATE For the remaining agents, $i \in \{h+1, \ldots, n\}$, set $\alpha_i = \left\lfloor \frac{|U|}{n} \right\rfloor$. \label{line:alpha-def-end}
  		 \FOR{each agent $i \in [n]$} \label{algLorenz:loop-begin}
			\STATE Select any size-$\alpha_i$ subset of chores $T \subseteq U$ and update $A_i \gets A_i \cup T$ along with $U \gets U \setminus T$. \label{algLorenz:chore-assign}   
         \ENDFOR  \label{algLorenz:loop-end}
		\RETURN $\Alloc = \left( A_1, A_2, \ldots, A_n \right)$
		\end{algorithmic}
\end{algorithm}

\begin{restatable}{theorem}{lorenzthm}\label{thm:lorenz}
For binary supermodular cost functions, Lorenz dominating allocations always exist and can be computed in polynomial time via Algorithm \ref{algo:Lorenz} (in the value-oracle model).
\end{restatable}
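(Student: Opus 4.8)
The plan is to run Algorithm~\ref{algo:Lorenz}, pin down the exact sorted cost profile of the allocation it returns, and then verify that this profile is the "most balanced'' one achievable, which makes it Lorenz dominating. Let $(B_1,\ldots,B_n)$ be the partial allocation produced by Subroutine~\ref{subroutine:cost-min-partial-alloc} in Line~\ref{algLorenz:subroutine-call}, let $\Alloc=(A_1,\ldots,A_n)$ be the final returned allocation, and note that $B_i\subseteq A_i$ for every $i$. As in the earlier proofs, each $c_i(B_i)=|B_i|-r_i(B_i)=0$ because $B_i\in\mathcal{I}_i$, while $|U|=m-\widehat{r}([m])=c^*$ by Lemma~\ref{lemma:min-cost-characterization}; and since $\Alloc$ extends $(B_1,\ldots,B_n)$, Lemma~\ref{lemma:completing-the-partial-alloc} gives $\sum_{i\in[n]}c_i(A_i)=c^*$. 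Writing $q\coloneqq\lfloor c^*/n\rfloor$ and $h\coloneqq c^*\bmod n$, Algorithm~\ref{algo:Lorenz} hands $\alpha_i$ fresh chores to agent $i$, where $\alpha_i=q+1$ for $h$ of the agents and $\alpha_i=q$ for the remaining $n-h$, so that $\sum_i\alpha_i=nq+h=c^*$.

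First I would establish that $c_i(A_i)=\alpha_i$ for every agent $i$. Since $A_i$ is $B_i$ together with $\alpha_i$ additional chores and the costs have binary marginals, $c_i(A_i)\le c_i(B_i)+\alpha_i=\alpha_i$; as $\sum_i c_i(A_i)=c^*=\sum_i\alpha_i$ and every term is nonnegative, these inequalities must all be equalities. Hence the sorted cost profile $\sigma(\Alloc)$ consists of $h$ copies of $q+1$ followed by $n-h$ copies of $q$, so $\sum_{j=1}^{k}\sigma(\Alloc)_j=kq+\min(k,h)$ for each $k\in[n]$. (An alternative route to $c_i(A_i)=\alpha_i$, not using Lemma~\ref{lemma:completing-the-partial-alloc}, is to observe that every unallocated chore $u$ lies in the $\mat_i$-span of $B_i$: otherwise $B_i+u\in\mathcal{I}_i$, so $(\cup_j B_j)+u$, being a union of sets independent in the respective $\mat_j$, would be independent in the union matroid $\umat$, contradicting that $\cup_j B_j$ is a basis; consequently adding chores of $U$ to $B_i$ leaves $r_i$ unchanged and raises $c_i$ by exactly one per chore.)

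Next I would show that $\sigma(\Alloc)$ Lorenz dominates the sorted profile of every complete allocation $\overline{\AllocB}=(\overline{B}_1,\ldots,\overline{B}_n)$. Let $\mathbf{y}=\sigma(\overline{\AllocB})$; each $c_i(\overline{B}_i)=|\overline{B}_i|-r_i(\overline{B}_i)$ is a nonnegative integer and $\sum_i y_i=\SC(\overline{\AllocB})\ge c^*$ by minimality of $c^*$. Fix $k\in[n]$ and let $P=\sum_{j=1}^{k}y_j$; since $y_1\ge\cdots\ge y_k$ are integers we have $y_k\le\lfloor P/k\rfloor$, hence $y_{k+1},\ldots,y_n\le\lfloor P/k\rfloor$, and therefore $c^*\le\sum_i y_i\le P+(n-k)\lfloor P/k\rfloor$. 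The right-hand side is nondecreasing in $P$, and a short case check (treating $\min(k,h)\ge 1$ and $h=0$ separately) shows that the choice $P=kq+\min(k,h)-1$ makes it strictly less than $c^*=nq+h$; hence $P\ge kq+\min(k,h)=\sum_{j=1}^k\sigma(\Alloc)_j$. Thus $\sigma(\Alloc)\geq_L\sigma(\overline{\AllocB})$ for every complete $\overline{\AllocB}$, i.e., $\Alloc$ is Lorenz dominating. Polynomial running time is then immediate: Subroutine~\ref{subroutine:cost-min-partial-alloc} runs in polynomial time in the value-oracle model (Theorem~\ref{theorem:social-cost-min}), and the distribution loop of Algorithm~\ref{algo:Lorenz} performs only $O(m)$ trivial updates.

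The step I expect to be the main obstacle is the prefix-sum comparison in the last paragraph. A priori it is not clear that a social-cost-minimizing allocation can be made Lorenz dominating at all, and the required inequality $\sum_{j\le k}\sigma(\overline{\AllocB})_j\ge kq+\min(k,h)$ is in fact false over real-valued costs (the uniform profile $(c^*/n,\ldots,c^*/n)$ would beat $\sigma(\Alloc)$); the argument must therefore exploit that each $c_i(S)=|S|-r_i(S)$ is integer-valued, and the case analysis verifying $P=kq+\min(k,h)-1\Rightarrow P+(n-k)\lfloor P/k\rfloor<c^*$ must be carried out with some care. Everything else — the computation of $\sigma(\Alloc)$ and the running-time bound — is routine given the earlier results and the structure of the algorithm.
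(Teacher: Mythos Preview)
Your proposal is correct and follows the paper's approach closely. Part~(a)---showing $c_i(A_i)=\alpha_i$---is identical to the paper's argument (binary marginals give $c_i(A_i)\le\alpha_i$, and Lemma~\ref{lemma:completing-the-partial-alloc} forces equality via $\sum_i c_i(A_i)=c^*=\sum_i\alpha_i$). For part~(b), both arguments exploit integrality of the costs together with $\sum_i\beta_i\ge c^*$, but the mechanics differ: the paper picks the smallest index $\ell$ with $\sum_{i\le\ell}\alpha_i>\sum_{i\le\ell}\beta_i$, deduces $\beta_\ell\le\lfloor c^*/n\rfloor$ from $\alpha_\ell>\beta_\ell$ and integrality, and then shows $\sum_i(\alpha_i-\beta_i)>0$, contradicting $\sum_i\beta_i\ge c^*$---no case analysis is needed. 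Your route instead bounds the suffix by $(n-k)\lfloor P/k\rfloor$ and then verifies by cases that $P=kq+\min(k,h)-1$ is too small; this is equally valid (your case check goes through: for $k\ge h\ge 1$ one gets $\lfloor P/k\rfloor=q$ and RHS $=nq+h-1$; for $k<h$ one gets $\lfloor P/k\rfloor=q$ and RHS $=nq+k-1<nq+h$; for $h=0$ one gets $\lfloor P/k\rfloor=q-1$ and RHS $=nq-(n-k)-1$), though slightly more laborious than the paper's smallest-index trick. Your remark that integrality is essential is exactly right and is also implicit in the paper's step $\alpha_\ell>\beta_\ell\Rightarrow\beta_\ell\le\lfloor c^*/n\rfloor$.
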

\begin{proof}
By definition (see  Lines \ref{line:alpha-def-begin} and \ref{line:alpha-def-end} of the algorithm), the integers $\alpha_i$s satisfy  $\sum_{i \in [n]} \alpha_i = |U|$, where $U$ denotes the set of unassigned chores in Line \ref{line:una-lor}. This observation implies that $U$ can be partitioned in $n$ subsets of sizes $\alpha_1, \alpha_2, \ldots, \alpha_n$. Hence, the for-loop of the algorithm successfully assigns to each agent $i \in [n]$ a distinct subset of size $\alpha_i$ and the algorithm returns a complete allocation.  

Write $\Alloc = (A_1,\ldots, A_n)$ to denote the complete allocation returned by Algorithm \ref{algo:Lorenz}. The proof will be in two parts:
\begin{itemize}
\item[(a)] First, we will show that $c_i(A_i) = \alpha_i$, for each agent $i \in [n]$. Here, the integers $\alpha_i$s are as defined in Lines \ref{line:alpha-def-begin} and \ref{line:alpha-def-end} of the algorithm. 
\item[(b)] Subsequently, we will prove that $\Alloc$ is Lorenz dominating, i.e., show that the sorted cost profile of $\Alloc$ satisfies $\sigma(\Alloc) \geq_L \sigma(\mathcal{X})$, for every complete allocation $\mathcal{X} \in \Pi_n([m])$. 
\end{itemize} 

\noindent
\emph{Proving $(a)$}. Let $\AllocB = (B_1, \ldots, B_n)$ be the allocation returned by Subroutine \ref{subroutine:cost-min-partial-alloc} in Line \ref{algLorenz:subroutine-call}. Since the allocation $\Alloc$ returned by the algorithm satisfies $A_i \supseteq B_i$, for all agents $i \in [n]$, Lemma \ref{lemma:completing-the-partial-alloc} ensures that $\Alloc$ is a social-cost-minimizing allocation. In particular, $\sum_{i \in [n]} c_i(A_i) = c^*$. Furthermore, the social cost of the partial allocation $\AllocB$ (returned by Subroutine \ref{subroutine:cost-min-partial-alloc}) is equal to $0$ -- this follows from the fact that $c_i(B_i) = |B_i| - r_i(B_i) = |B_i| - |B_i| = 0$, for all agents $i \in [n]$. Hence, the difference in social cost of $\Alloc$ and $\AllocB$ can be bounded as 
\begin{align*}
\sum_{i \in [n]} c_i(A_i) - \sum_{i \in [n]} c_i(B_i) & = c^* - 0 \\ 
& = m - \widehat{r}([m]) \tag{via Lemma \ref{lemma:min-cost-characterization}} \\
& = m - |\cup_{i \in [n]} B_i| \tag{$(B_1, \ldots, B_n)$ is a basis of $\umat$} \\
& = |U|
\end{align*}
Here, $U$ denotes the set of unassigned chores in Line \ref{line:una-lor}, i.e., $U = [m] \setminus \big( \cup_{i \in [n]} B_i \big)$. As mentioned previously, we have $\sum_{i \in [n]} \alpha_i = |U|$. Therefore, 
\begin{align}
\sum_{i \in [n]} c_i(A_i) - \sum_{i \in [n]} c_i(B_i) = \sum_{i \in [n]} c_i(A_i) - 0 = \sum_{i \in [n]} \alpha_i \label{ineq:sum-alpha}	
\end{align}
However, $A_i$ was obtained by assigning a size-$\alpha_i$ subset of chores to agent $i \in [n]$; see Line \ref{algLorenz:chore-assign}. Therefore, the binary marginals property of the cost functions, imply $c_i(A_i) \leq c_i(B_i) + \alpha_i = \alpha_i$, for all agents $i \in [n]$. These upper bounds and inequality (\ref{ineq:sum-alpha}) give us the desired equality $c_i(A_i) = \alpha_i$, for all agents $i \in [n]$. \\

\noindent
\emph{Proving $(b)$}. Using $(a)$, we can write the sorted cost profile of the returned allocation $\Alloc$ as follows: $\sigma(\Alloc) = (\alpha_1, \alpha_2, \ldots, \alpha_n)$ where $\alpha_i = \lceil |U|/n \rceil$ for $i \leq h$ and $\alpha_i = \lfloor |U|/n \rfloor$ for $i > h$. Consider any complete allocation $\AllocX \in \Pi_n([m])$ and write its sorted cost profile $\sigma(\AllocX) = (\beta_1, \beta_2, \ldots, \beta_n)$. Here, the costs are sorted, $\beta_1 \geq \ldots \geq \beta_n$, and their sum is at least $c^*$ (the minimum social cost of the instance), i.e., $\sum_{i=1}^n \beta_i \geq c^*$. We will show that, for all indices $\ell \in [n]$, the following inequality holds: $\sum_{i=1}^\ell \alpha_i \leq \sum_{i=1}^\ell \beta_i$. Therefore, we will have $\sigma(\Alloc) \geq_L \sigma(\AllocX)$ for all complete allocation $\AllocX$. 

Assume, towards a contradiction, that $\sum_{i=1}^\ell \alpha_i > \sum_{i=1}^\ell \beta_i$ for some index $\ell \in [n]$. We can assume, without loss of generality, that $\ell$ is the smallest index for which this inequality holds; otherwise, we can reduce $\ell$ and maintain the prefix-sum inequality. Given that $\sum_{i=1}^\ell \alpha_i > \sum_{i=1}^\ell \beta_i$ and $\ell$ is the smallest inequality for which this inequality holds, it must be the case that $\alpha_\ell > \beta_\ell$. Furthermore, since $\alpha_\ell$ is either $\lceil |U|/n \rceil$ or $\lfloor |U|/n \rfloor$, we get that $\beta_\ell \leq \lfloor |U|/n \rfloor$. Furthermore, the sorting of the $\beta_i$s imply $\beta_j  \leq \lfloor |U|/n \rfloor$ for all indices $j \geq \ell$. These bounds, however, contradict the fact that $\sum_{i=1}^n \beta_i \geq c^*$; specifically, 
\begin{align*}
c^* - \sum_{i=1}^n \beta_i & = \sum_{i=1}^n \alpha_i - \sum_{i=1}^n \beta_i \\
& = \left(\sum_{i=1}^\ell \alpha_i - \sum_{i=1}^\ell \beta_i \right) + \sum_{j=\ell+1}^n \left(\alpha_j -  \beta_j \right) \\
& > 0 + \sum_{j=\ell+1}^n \left(\alpha_j -  \beta_j \right) \tag{by definition of $\ell$} \\ 
& \geq 0 \tag{since $\beta_j \leq \lfloor |U|/n \rfloor \leq \alpha_j$, for all $j \geq \ell$}
\end{align*}
The last inequality simplifies to $c^* >  \sum_{i=1}^n \beta_i$ and, hence, contradicts the definition of $c^*$. Therefore, it must be the case that, for all indices $\ell \in [n]$, we have $\sum_{i=1}^\ell \alpha_i \leq \sum_{i=1}^\ell \beta_i$. This shows that $\Alloc$ is a Lorenz dominating allocation and establishes (b).

Finally, note that Subroutine \ref{subroutine:cost-min-partial-alloc}, Lines \ref{algLorenz:loop-begin} to \ref{algLorenz:loop-end}, and, hence, Algorithm \ref{algo:Lorenz} can be executed in polynomial time, given value-oracle access to the cost functions. This completes the proof. 
\end{proof}

\section{Incomparability of Notions} 
\label{sec:incom}


Given the encompassing positive results obtained in the previous section for binary supermodular costs, one might wonder whether the three considered fairness desiderata (namely, Lorenz domination, $\MMS$, and $\EFone$) are implied by each other, under this class of functions. This section shows that this is not the case, i.e., there exist allocations that satisfy one of these criteria, but not any other. That is, the notions are \emph{incomparable} in the context of binary supermodular costs and an allocation that satisfies any one of these criteria does not necessarily satisfy any other. 

We also show the following stronger result: there exist instances (with binary supermodular costs) in which Lorenz domination is \textit{incompatible} with $\MMS$ and $\EFone$. Specifically, the set of Lorenz dominating allocations is disjoint from the sets of $\MMS$ and $\EFone$ allocations. The result implies that, under binary supermodular costs, it is impossible to obtain Lorenz domination alongside $\MMS$ or $\EFone$. Notably, this result is in contrast to the case of goods, where such implications often hold. For example, in the fair division of goods and under the complementary class of binary submodular valuations, any Lorenz dominating allocation is always $\EFone$~\cite{babaioff2021fair}.

\begin{theorem} \label{thm:incomparable}
For binary supermodular cost functions, (i) Lorenz domination, (ii) the $\MMS$ guarantee, and (iii) $\EFone$ are incomparable.
\end{theorem}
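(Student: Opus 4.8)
The plan is to establish incomparability by exhibiting three small instances with binary supermodular costs and, in each, an allocation that witnesses two of the six possible non-implications among Lorenz domination, $\MMS$, and $\EFone$. I specify each cost function through its matroid: by Lemma~\ref{lemma:submod-supermod}, $c_i(S) = |S| - r_i(S)$ is binary supermodular for every matroid-rank function $r_i$, so it suffices to name a matroid $\mat_i$ on the chores; minimax shares are then read off from Lemma~\ref{lemma:minimax-share}, i.e.\ $\tau_i = \lceil (m - r_{i\times n}([m]))/n\rceil$. I also use the elementary observation that, for costs with binary marginals, an allocation $\Alloc$ is $\EFone$ iff $c_i(A_i) \le c_i(A_j) + 1$ for all $i,j$: if $c_i(A_i)\ge 1$ then $A_i$ is dependent in $\mat_i$, so it contains a circuit, and deleting any element of that circuit drops $r_i$ (hence $c_i$) by exactly one; in the small instances below one can instead just check the $\EFone$ definition directly.

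Separating Lorenz domination from $\MMS$ and $\EFone$: take $n = 2$, chores $\{1,2,3,4\}$, $\mat_1$ the rank-$1$ uniform matroid (so $c_1(S) = \max\{0,|S|-1\}$) and $\mat_2$ the all-loops matroid (so $c_2(S) = |S|$). Ranging over $|A_1|\in\{0,\dots,4\}$ gives $c^* = 3$, and the only achievable sorted cost profiles are $(2,1),(3,0),(4,0)$, among which $(2,1)$ Lorenz-dominates the others; since no profile can beat $(2,1)$ (a smaller top entry would force total cost $<3$), the Lorenz-dominating profile is $(2,1)$, realized exactly by allocations giving agent~$1$ two or three chores. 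Also $\tau_1 = 1$ and $\tau_2 = 2$. Thus $A_1 = \{1,2,3\}$, $A_2 = \{4\}$ has profile $(2,1)$ and is Lorenz dominating, yet $c_1(A_1) = 2 > \tau_1$ so it is not $\MMS$, and $c_1(A_1) = 2 > c_1(A_2)+1 = 1$ so it is not $\EFone$.

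Separating $\MMS$ from Lorenz domination and $\EFone$: take $n = 3$, chores $\{1,\dots,9\}$, $\mat_1$ the rank-$1$ uniform matroid and $\mat_2,\mat_3$ free matroids ($c_2\equiv c_3\equiv 0$); here $c^* = 0$, $\tau_1 = 2$, $\tau_2=\tau_3=0$, so $A_1 = \{1,2,3\}$, $A_2 = \{4,\dots,9\}$, $A_3 = \emptyset$ has profile $(2,0,0)$ and is $\MMS$, but is not Lorenz dominating (the Lorenz profile is $(0,0,0)$) and not $\EFone$ (agent~$1$ envies agent~$3$, as $c_1(A_1)=2 > c_1(A_3)+1 = 1$). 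Separating $\EFone$ from Lorenz domination and $\MMS$: take $n = 2$, chores $\{1,2,3,4\}$, $\mat_1$ the partition matroid with blocks $\{1,2\},\{3,4\}$ of capacity one (so $c_1(S) = |S| - \min\{|S\cap\{1,2\}|,1\} - \min\{|S\cap\{3,4\}|,1\}$) and $\mat_2$ a free matroid; here $c^* = 0$ and $\tau_1 = \tau_2 = 0$ (agent~$1$ can split each block across the two bundles), so $A_1 = \{1,2\}$, $A_2 = \{3,4\}$ has $c_1(A_1)=1=c_1(A_2)$, hence is $\EFone$ (deleting chore~$1$ from $A_1$ removes the envy), but is not Lorenz dominating (profile $(1,0)$ versus $(0,0)$) and not $\MMS$ ($c_1(A_1)=1>\tau_1=0$). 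These three instances together supply all six non-implications, hence incomparability; a small tweak of the first instance further yields the stronger fact that the set of Lorenz-dominating allocations can be disjoint from the sets of $\MMS$ and $\EFone$ allocations.

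The main obstacle is the first instance. Since a Lorenz-dominating allocation is at once social-cost minimizing and leximin, its sorted cost profile is essentially pinned down, and for symmetric or additive instances the leximin allocation is automatically $\MMS$ and $\EFone$; the separation therefore needs genuine heterogeneity --- one agent whose uniform-type cost makes the leximin balancing load a large, expensive bundle on her while her minimax share is measured by a more favourable self-partition --- together with the chores-specific form of $\EFone$ (a chore is deleted from the \emph{envier's} bundle), so that a three-chore bundle she values at $2$ cannot be repaired against a singleton she values at $0$.
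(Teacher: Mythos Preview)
Your proof is correct and takes essentially the same approach as the paper: both establish incomparability by exhibiting explicit small instances with binary supermodular costs and, in each, an allocation witnessing the relevant non-implications. Your instances are different (and somewhat smaller) than the paper's, and you organize the six non-implications into three pairs rather than the paper's $4{+}2$ split, but the method---direct counterexamples, with minimax shares computed via Lemma~\ref{lemma:minimax-share}---is the same.
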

\begin{proof}
We will first show that Lorenz domination and MMS are incomparable. Consider the following chore division instance with $m=11$ chores and $n=3$ agents each with a binary supermodular cost function. Agent~1's cost is simply the cardinality, $c_1(S) = |S|$, for any subset $S \subseteq [m]$. Agent~2 and agent~3 have the following (identical) cost function: $c_2(S) = c_3(S) = \max\{ 0, |S|-3 \}$. One can verify that all the three cost functions are binary supermodular. 

This instance admits a Lorenz dominating cost profile of $(2,2,1)$ induced by assigning $1$ chore to agent $1$ (and, hence, her cost is $1$) and $5$ chores each to agents 2 and 3 (their costs are equal to $2$).\footnote{Alternatively, agent~1 can receive a cost of $2$ and one of the other agents receives $4$ chores and, hence, incurs a cost of $1$.} 

Note that, in this instance, agent~1's minmax share $\tau_1 = 4$, while agents~2 and~3 have shares $\tau_2 = \tau_3 =1$. Hence, in order for agents~2 and~3 to achieve the $\MMS$ guarantee, they must each get at most four items. However, in this case, agent~1 must get at least three items with a cost of $3$. Therefore, the cost profile of any $\MMS$-fair allocation is $(3,1,1)$ (or worse, in a Lorenz domination sense). Since the cost profile $(3,1,1)$  is Lorentz dominated by $(2,2,1)$, in this instance $\MMS$ does not imply Lorenz domination. 

Similarly, in any Lorenz dominating allocation at least two agents have a cost of $2$. Hence, either agent~2 or agent~3 does not achieve its $\MMS$ guarantee (i.e., receives a bundle of cost more than its minmax share).

The same instance shows that Lorenz domination and $\EFone$ are incomparable. In any $\EFone$ allocation, the $m=11$ chores have to be partitioned into bundles of (near-equal) sizes of $4$, $4$ and $3$. Hence, in any $\EFone$ allocation agent~1 necessarily obtains a cost of at least $3$. That is, an $\EFone$ allocation cannot be Lorentz dominating.  

Complementarily, in any allocation that is Lorenz dominating, agent~1 must receive at most $2$ chores, while at least one of the remaining agents receives $5$ chores and envies agent~1, beyond the removal of one chore. 

Finally, we will show that $\EFone$ and $\MMS$ are incomparable. Consider the following instance, with $m=10$ chores and $n=3$ agents with binary supermodular costs. In particular, agent 1's and agent 2's cost for a set of chores is the cardinality, $c_1(S) = c_2(S) = |S|$, for any subset $S \subseteq [m]$. Agent~3 always obtains a marginal cost of $1$ for chores $t_1, t_2, t_3$, and $t_4$, but sees the pairs $\{t_5, t_6\}$, $\{t_7, t_8\}$ and $\{t_9, t_{10}\}$ as complements. That is, agent~3 only obtains a cost increase of $1$ whenever it is allocated both chores out of a pair. For the first two agents, the minimax shares  $\tau_1 = \tau_2 = 4$. For agent 3, the minimax share $\tau_3 = 2$; this is obtained via a partition in which all the three pairs are separated and, hence, add zero cost to any bundle for agent 3.  
 
The specific allocation $\Alloc=(A_1, A_2, A_3)$, with $A_1 = \{t_4, t_5, t_6\}$, $A_2 = \{t_7, t_8, t_9, t_{10}\}$, and $A_3 = \{t_1, t_2, t_3\}$, is $\EFone$ but not $\MMS$, since agent~3 has a cost of $3$. On the other hand, the allocation $\AllocB=(B_1, B_2, B_3)$, with bundles $B_1 = \{t_1, t_2, t_5\}$, $B_2 = \{t_6, t_8, t_{10}\}$, $B_3 = \{t_3, t_4, t_7, t_9\}$ is $\MMS$-fair but not $\EFone$: agent~3 envies agent~2 even after the removal of any chore from the bundle of agent~3. Hence, $\EFone$ and $\MMS$ are incomparable, and neither property implies the other for an allocation.
\end{proof}

\noindent \textbf{Remark.} The first example in the above proof shows, in fact, the stronger result stated previously: for this instance, the set of Lorenz dominating allocations is disjoint from the sets of $\MMS$ and $\EFone$ allocations. Hence, Lorenz domination and $\EFone$---along with Lorenz domination and $\MMS$---are incompatible.




\section{$\EFX$ Chore Allocations} \label{sec:efx}
In the context of binary supermodular costs, we have established the existence of Pareto efficient and $\EFone$ allocations. This motivates us to consider the stronger fairness criterion of $\EFX$. In stark contrast to the results of Section \ref{sec:positive}, we show that for binary supermodular cost functions, Pareto efficient allocations that are even \emph{approximately} $\EFX$ do not exist. The notion of approximate $\EFX$ is defined as follows.

%
%
%
%
%

\begin{definition}[Approximate $\EFkX$]
For any $\beta \in [0,1]$ and integer $k \geq 1$, in a chore division instance $\langle [n], [m], \{c_i\}_{i \in [n]} \rangle$, an allocation $\Alloc = (A_1, A_2, \ldots, A_n)$ is said to be $\beta$-$\EFkX$ iff for all pair of agents $i,j \in [n]$, with $|A_i| \geq k$, we have $\beta c_i(A_i \setminus T) \leq c_i(A_j)$ for all size-$k$ subsets $T \subseteq A_i$.
\end{definition}

An allocation is said to be $\EFX$ iff it satisfies the definition above with $\beta = 1$ and $k = 1$. We now state establish the main negative result for this section. 

\begin{restatable}{theorem}{negativepoefx}\label{theorem:negative-PO-EFX}
Under binary supermodular costs, for any $\beta \in (0,1]$ and any integer $k\geq 1$, allocations that are both Pareto efficient and $\beta$-$\EFkX$ do not necessarily exist. Furthermore, the nonexistence holds even in the case of identical (binary supermodular) cost functions.
\end{restatable}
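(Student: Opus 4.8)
The plan is to exhibit, for every integer $k\ge 1$, a \emph{single} chore-division instance with \emph{identical} binary supermodular costs in which no Pareto efficient allocation is $\beta$-$\EFkX$ simultaneously for all $\beta\in(0,1]$; since the construction uses identical costs, it proves both the general statement and the identical-costs refinement at once. Set $n=k+2$ and $\rho=k+1$, partition the chore set $[m]$ into $\rho$ blocks $G_1,\dots,G_\rho$ with $|G_j|=n$ for $j<\rho$ and $|G_\rho|=n+(k+1)=2k+3$ (so $m=n\rho+(k+1)=(k+1)(k+3)$), and let $\mathcal{M}$ be the partition matroid with capacity one per block. Every agent gets the identical cost $c(S)=|S|-r(S)=\sum_{j=1}^{\rho}\max\{0,|S\cap G_j|-1\}$, which is binary supermodular by Lemma~\ref{lemma:submod-supermod}.

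First I would pin down the Pareto efficient allocations. I claim that for any identical cost of the form $c=|\cdot|-r$ with $r$ a matroid-rank function, every Pareto efficient allocation is social-cost minimizing: if $\mathcal{A}=(A_1,\dots,A_n)$ is not social-cost minimizing, pick a basis $I_p\subseteq A_p$ of each $A_p$ and write $R_p=A_p\setminus I_p$; then $\bigcup_p I_p$ is independent but not a basis of the $n$-fold union $\umat$, so by the augmentation property there is $x\in\bigcup_p R_p$, say $x\in R_q$, with $(\bigcup_p I_p)+x$ independent in $\umat$; repacking $(\bigcup_p I_p)+x$ into parts $J_1,\dots,J_n$ (each independent in $\mathcal{M}$) and setting $A_p'=J_p\cup(R_p\setminus\{x\})$ gives a complete allocation with $c(A_p')\le|A_p'|-|J_p|\le c(A_p)$ for all $p$ and $c(A_q')\le c(A_q)-1$, i.e.\ a Pareto improvement. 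Hence the Pareto efficient allocations are exactly the social-cost-minimizing ones. Next, using the machinery of Section~\ref{sec:cost-min}: each block has $\ge n$ elements, so $\mathcal{M}$ has $n$ disjoint bases, $\widehat r([m])=n\rho$, and $c^*=m-n\rho=k+1$ by Lemma~\ref{lemma:min-cost-characterization}; moreover every social-cost-minimizing allocation $\mathcal{A}$ decomposes as $A_i=I_i\sqcup R_i$ with $I_i$ a \emph{basis} of $\mathcal{M}$ (one element from each block) and $|R_i|=c(A_i)$, where $\sum_i|R_i|=c^*=k+1$ (this follows since $\sum_i r(A_i)=m-c^*=\widehat r([m])$ forces each $|I_i|=r(A_i)=\rho$).

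Then I would derive the violation. Since $\sum_i|R_i|=k+1<n$, some agent $j$ has $R_j=\emptyset$, i.e.\ $c(A_j)=0$, while some agent $i$ has $|R_i|\ge1$. Pick $e\in R_i$ with $e\in G_{j_0}$, and let $b$ be the unique element of $I_i\cap G_{j_0}$; then $\{b,e\}\subseteq G_{j_0}$ is a circuit of $\mathcal{M}$. Choose any $T\subseteq I_i\setminus\{b\}$ with $|T|=k$ (possible because $|I_i\setminus\{b\}|=\rho-1=k$). Then $A_i\setminus T$ still contains the circuit $\{b,e\}$, so $c(A_i\setminus T)\ge1$, and $|A_i|=\rho+|R_i|>k$, so $\beta\,c(A_i\setminus T)\ge\beta>0=c(A_j)$. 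Thus $\mathcal{A}$ is not $\beta$-$\EFkX$. Since this holds for every social-cost-minimizing allocation, and these are precisely the Pareto efficient ones, the theorem follows for every $\beta\in(0,1]$ and every $k\ge1$, with identical costs.

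The main obstacle is selecting the right instance. Uniform—or more generally "balanceable"—cost functions are useless here, because one can always spread the surplus evenly and obtain an envy-free (hence $\EFX$) Pareto efficient allocation; the construction must instead force every social-cost-minimizing allocation to leave one agent at cost $0$ while another necessarily holds a small circuit that survives the deletion of $k$ chores. The partition matroid with one oversized block and the tuning $n=k+2$, $\rho=k+1$, $c^*=k+1$ is what makes the three needed properties ($n$ disjoint bases, surplus $k+1<n$, circuits of size $2\le\rho-k+1$) hold at once. The other delicate point is the step identifying Pareto efficiency with social-cost minimization for identical matroid-based costs; without it one would have to rule out non-minimizing Pareto efficient allocations by hand.
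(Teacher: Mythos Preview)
Your proof is correct, but it takes a considerably more elaborate route than the paper's. The paper uses a minimal two-agent instance: $m=2k+1$ chores consisting of $k$ complementary pairs $\{t_{2j-1},t_{2j}\}$ (each pair contributes cost $1$ only when both elements are held) together with one chore $t_{2k+1}$ that always has marginal cost $1$ (a loop of the underlying matroid). Any Pareto efficient allocation must split every pair and assign $t_{2k+1}$ to one agent, yielding cost profile $(1,0)$; taking $T$ to be the $k$ pair-elements in the cost-$1$ bundle leaves $\{t_{2k+1}\}$ with cost $1$, so $\beta$-$\EFkX$ fails against the zero-cost agent for every $\beta>0$. No auxiliary lemma is needed because the PO allocations are characterized by inspection. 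Your construction instead uses $k+2$ agents and a partition matroid with no loops, and compensates by proving a general structural fact---that for identical costs of the form $c=|\cdot|-r$, Pareto efficiency coincides with social-cost minimization---via a repacking/augmentation argument in the $n$-fold union matroid. That lemma is correct and arguably of independent interest, and your instance shows the impossibility persists even without ``always-cost-$1$'' chores; but for the theorem as stated the paper's two-agent example is far more economical and avoids the machinery of Section~\ref{sec:cost-min} entirely.
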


\begin{proof}
Consider the following chore division instance with $m=2k+1$ chores and $n=2$ agents that have identical binary supermodular cost functions. The two agents always obtain a marginal cost of $1$ for chore $t_{2k+1}$, but see (chore) pairs $\{t_1, t_2\}, \{t_3, t_4\}, \ldots, \{t_{2k-1}, t_{2k}\}$ as complements. That is,  each agent only experiences a cost increase of $1$ when it is allocated both the chores from a pair, and has a marginal cost of $0$ when it is allocated only one chore from a pair. 

In any Pareto optimal allocation, the agents must incur no cost for the $k$ pairs of items. Hence, in any PO (complete) allocation each of the $k$ pairs are split between the two agents (with each agent getting one chore), and one agent gets the final chore $t_{2k+1}$. One can verify that, in the instance at hand, no other allocation is Pareto optimal. Consequently, $(0,1)$ and $(1,0)$ are the only cost profiles under all PO allocations. 

However, in any PO allocation, the agent that receives the chore $t_{2k+1}$ envies the other agent even after the removal of the other $k$ items from its bundle. Since the other agent has a cost of zero, the remaining envy is not eliminated via scaling down by any multiplicative factor. Therefore, in the given instance, for any $\beta\in(0,1]$ and any positive integer $k \geq 1$, there does not exist a complete allocation that is both Pareto efficient and $\beta$-$\EFkX$.
\end{proof}



\subsection{Finding $\EFX$ Chore Allocations}

Complementing the previous negative result (Theorem \ref{theorem:negative-PO-EFX}) and focusing on fairness alone, we show that under \textit{identical monotone} cost functions, an $\EFX$ allocation always exists. Notably, this result only requires the (identical) cost function to be monotonic and is obtained via an algorithmic approach. In fact, for cost functions $c$ that are integer-valued (i.e., $c(S) \in \mathbb{Z}_{\geq 0}$ for all subsets $S$), our algorithm (Add-and-Fix) runs in pseudo-polynomial time (Theorem \ref{theorem:identical-costs}). Hence, for the particular case of identical cost functions with binary marginals (which are not required to be supermodular), we obtain a polynomial-time algorithm for finding $\EFX$ allocations. 

Our positive results for $\EFX$ can be directly adapted to the goods settings. Thus we obtain a pseudo-polynomial time algorithm for finding $\EFX$ allocations of goods under identical, monotone, and integer-valued valuations.

We obtain this positive result via Algorithm \ref{algo:EFX} (Add-and-Fix), which executes with value-oracle access to the cost function. Add-and-Fix incrementally constructs an $\EFX$ allocation (in the outer while loop, Lines \ref{algEFX:outer-loop-begin}-\ref{algEFX:inner-loop-end}), always maintaining that the current partial allocation is $\EFX$. In each incremental step, the algorithm updates the bundle of an agent $\ell \in [n]$ with the lowest current cost. In particular, first the algorithm adds a currently unassigned chore $t$ into agent $\ell$'s bundle (i.e., into $A_\ell$). Then, complementing the chore inclusion, the algorithm removes chores from $A_\ell$ (see the while-loop at Line \ref{algEFX:inner-loop-begin}) until $\EFX$ is restored with respect to the agent $s \in [n]$ with the second-lowest cost. 

We will show that the (partial) allocation obtained after each such update is also $\EFX$. Moreover, each update either increases the social cost, or decreases the number of unallocated chores (while maintaining the social cost). This property is used to argue that Add-and-Fix terminates and the allocation obtained at the end is $\EFX$. The result is formally stated in the following theorem.

\floatname{algorithm}{Algorithm}
\begin{algorithm}[ht]
\caption{\textsc{Add-and-Fix}} \label{algo:EFX}
{\bf Input:} Chore division instance $\langle [n], [m], \{c \} \rangle$ with value-oracle access to the identical cost $c$. \\
\textbf{Output:} Complete allocation $\Alloc = (A_1, \ldots, A_n)$.
\begin{algorithmic}[1]
\STATE Initialize set of unassigned chores $U = [m]$ and bundles $A_i = \emptyset$ for each agent $i \in [n]$.
\WHILE{$U \neq \emptyset$} \label{algEFX:outer-loop-begin}
\STATE Select any chore $t \in U$. \label{algEFX:chore-t} 
\STATE Select $\ell \in \argmin_{k \in [n]} c(A_k)$. \label{algEFX:min-agent}
\STATE Select $s \in \argmin_{k \in [n] \setminus \{ \ell \}} c(A_k)$. \label{algEFX:sec-min-agent}
\STATE Update bundle $A_{\ell}  \gets A_\ell + t$ and the set of unassigned chores $U \gets U - t$ \label{algEFX:t-addition}  \label{algEFX:t-removal}
			\WHILE{there exists a chore $q \in A_\ell$ such that $c(A_\ell - q) > c(A_s)$} \label{algEFX:inner-loop-begin} 
				\STATE Update $A_\ell \gets A_\ell - q$ and $U \gets U + q$. \label{algEFX:q-removal} \label{algEFX:q-addition}
			\ENDWHILE \label{algEFX:inner-loop-end}
			
		\ENDWHILE \label{algEFX:outer-loop-end}
		\STATE \textbf{return } $\Alloc = \left( A_1, A_2, \ldots, A_n \right)$
		\end{algorithmic}
\end{algorithm}

\begin{restatable}{theorem}{idencosts}
\label{theorem:identical-costs}
In any chore division instance in which the agents have identical costs $c$ with binary marginals, an $\EFX$ allocation always exists and can be computed by Add-and-Fix (Algorithm \ref{algo:EFX}) in polynomial time (given value-oracle access to the cost function).
\end{restatable}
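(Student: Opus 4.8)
The plan is to show that Add-and-Fix maintains the invariant ``the current partial allocation $(A_1,\ldots,A_n)$ is $\EFX$'' across all iterations of the outer while loop, and that the loop terminates after polynomially many iterations; since the loop exits only when $U=\emptyset$ (i.e., the allocation is complete), these two facts together give the theorem. As a preliminary simplification I would observe that, because the costs are identical and monotone, a partial allocation is $\EFX$ exactly when $c(A_i-q)\le \min_{k\in[n]} c(A_k)$ for every agent $i$ and every chore $q\in A_i$ (the constraint of an agent against itself is vacuous by monotonicity, and identical costs make the minimum-cost agent the binding one). The base case --- all bundles empty --- is then immediate.

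For the inductive step I would analyze a single pass of the outer loop. Let $\ell$ be the chosen minimum-cost agent, $s$ the second-minimum-cost agent, and write $A_\ell$ for agent $\ell$'s bundle at the start of the pass and $A_\ell'$ for its bundle at the end; all other bundles are unchanged, and $c(A_s)$ is fixed throughout the pass since $s\ne\ell$. The crucial sub-claim is
\[
c(A_\ell')\ \ge\ c(A_\ell),\qquad\text{with strict inequality whenever some chore is removed in the inner ``Fix'' loop.}
\]
Indeed, the Fix loop removes a chore $q$ only when $c(A_\ell-q)>c(A_s)$, and $c(A_s)\ge c(A_\ell)$ because $\ell$ is a minimum-cost agent; hence after the last removal $c(A_\ell')>c(A_\ell)$, and since the cost values are nonnegative integers under binary marginals (normalizing $c(\emptyset)=0$, which is without loss of generality as the algorithm and $\EFX$ depend only on differences of costs), $c(A_\ell')\ge c(A_\ell)+1$. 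If no chore is removed then $A_\ell'=A_\ell+t$ and monotonicity gives $c(A_\ell')\ge c(A_\ell)$. From the same threshold inequality, the Fix loop cannot remove the last chore of $A_\ell$ (that would drop its cost to $c(\emptyset)$, which cannot exceed $c(A_s)$), so the Fix loop terminates and $A_\ell'\ne\emptyset$.

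Given the sub-claim, $\EFX$ is restored after the pass: for agent $\ell$ and any $q\in A_\ell'$, the exit condition of the Fix loop gives $c(A_\ell'-q)\le c(A_s)\le c(A_k)$ for every $k\ne\ell$; for any other agent $j$, its $\EFX$ constraints against agents $k\ne\ell$ are untouched, while its constraint against $\ell$ can only relax since $c(A_\ell')\ge c(A_\ell)$. This last point is exactly the step I expect to be the main obstacle: shrinking agent $\ell$'s bundle is precisely what could create a violation of the form ``an unchanged agent $j$ now envies $\ell$ beyond any single chore,'' and ruling it out rests entirely on the inequality $c(A_\ell')\ge c(A_\ell)$, which in turn hinges on $c(A_s)\ge c(A_\ell)$ together with the removal threshold and integrality.

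For termination and the running time I would use a potential argument. By the sub-claim, each pass of the outer loop either strictly increases the social cost $\sum_i c(A_i)$, or leaves it unchanged while strictly decreasing $|U|$ (the ``no removal'' case, in which exactly the chore $t$ is permanently allocated). Hence $\Phi\coloneqq(m+1)\sum_i c(A_i)+(m-|U|)$ increases by at least $1$ in every pass. Under binary marginals $\sum_i c(A_i)\le\sum_i|A_i|=m$, so $\Phi=O(m^2)$ and the algorithm halts after $O(m^2)$ passes. Each pass uses $O(n+m^2)$ value queries --- $O(n)$ to select $\ell$ and $s$, and $O(m)$ per Fix iteration over at most $m$ Fix iterations --- so the overall running time is polynomial, and by the invariant the returned (complete) allocation is $\EFX$.
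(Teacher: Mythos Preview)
Your proposal is correct and follows essentially the same approach as the paper: maintain $\EFX$ as an invariant across outer iterations, use the key inequality $c(A'_\ell)\ge c(A_\ell)$ (strict when removals occur, via the removal threshold $c(A_\ell-q)>c(A_s)\ge c(A_\ell)$) to show the invariant is preserved for the non-$\ell$ agents, and bound the number of iterations by the fact that each pass either increases social cost or allocates a chore. Your explicit potential $\Phi=(m+1)\sum_i c(A_i)+(m-|U|)$ and your preliminary reduction of $\EFX$ to the single constraint against the minimum-cost agent are clean touches, but they do not constitute a genuinely different route.
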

\begin{proof}
We begin by showing that Add-and-Fix maintains the $\EFX$ guarantee as an invariant, i.e., each partial allocation $\Alloc$ considered at the beginning of the outer-while loop is $\EFX$. 


We provide an inductive argument showing that the $\EFX$ invariant is maintained throughout the algorithm's execution. In particular, assume that the partial allocation $\Alloc$ at the start of an iteration of the outer-while loop is $\EFX$ (which is trivially true for the first iteration). We will show that the (partial) allocation at the end of the iteration, which we will denote by $\Alloc'$, will also be $\EFX$. 

Note that, in Line \ref{algEFX:t-addition}, an unallocated chore $t \in U$ is included in the bundle of the agent $\ell \in [n]$ with the lowest cost bundle. If, after this assignment, the (partial) allocation continues to be $\EFX$, then the inner-while loop will not execute and the allocation at the end of the iteration, $\Alloc'$, will be $\EFX$ (i.e., the invariant is maintained). Otherwise, the addition of chore $t$ violates $\EFX$, which can only happen if the cost of agent $\ell$'s bundle increases and agent $\ell$ starts envying agent $s$ (i.e., $c(A_\ell) > c(A_s)$). In this case, the inner while-loop will iteratively continue to remove chores $q$ from agent $\ell$'s bundle as long as $c(A_\ell - q) > c(A_s)$. Note that the selection criterion ensures that even after the removal of $q$, the cost of agent $\ell$'s bundle is more than $c(A_s)$. Hence, the inner-while loop continues until we obtain an allocation $\Alloc'=(A'_1, \ldots, A'_n)$ satisfying $c(A'_\ell) > c(A'_s)$ and $c(A'_\ell - q) \leq c(A'_s)$ for all chores $q \in A'_\ell$. 

This resultant allocation $\Alloc'$ is $\EFX$ for agent $\ell$, since, for all chores $q \in A'_i$ and all agents $k \in [n] \setminus \{\ell\}$, we have 
\begin{align*}
c(A'_\ell - q) \leq c(A'_s) = c(A_s) \leq c(A_k) = c(A'_k).
\end{align*} 
In addition, $\Alloc'$ will be $\EFX$ for all other agents $k \in [n] \setminus \{\ell\}$: the bundles $A'_k = A_k$, and the cost of agent $\ell$'s bundle increases, $c(A'_\ell) > c(A'_s) = c(A_s) \geq c(A_k - q) = c(A'_k - q)$, for all chores $q \in A'_k$. This establishes that the (partial) allocation $\Alloc'$ is $\EFX$. 

We will next show that the outer while-loop successfully assigns all the chores within polynomially many iterations. That is, the outer while-loop terminates with a complete $\EFX$ allocation. Add-and-Fix thus necessarily finds an $\EFX$ allocation in polynomial time.   

Note that if, in an iteration of the outer while-loop, the assignment of chore $t$ (to agent $\ell$) does not lead to an $\EFX$ violation, then the number of unassigned chores, $|U|$, decreases in that iteration. Otherwise, if the chore assignment violates $\EFX$, then social cost among the agents strictly increases: as observed above, in such a case we end up at an allocation $\Alloc'$ with the property that $c(A'_\ell) > c(A_s) \geq c(A_\ell)$ and $c(A'_k) = c(A_k)$, for all agents $k \in [n] \setminus \{\ell\}$. Hence, here, the cumulative cost of the agents strictly increases. Now, given that the cost function has binary marginals, the social cost can only increase $m$ times and $|U|$ can decrease consecutively at most $m$ times. Hence, the outer while-loop will terminate in at most $m^2$ iterations and provide a complete $\EFX$ allocation. Also, note that we only require value-oracle access to the cost function to implement all the steps in the algorithm. 

This, overall, shows that the algorithm returns an $\EFX$ allocation in polynomial time.
\end{proof}

As shown above, Algorithm \ref{algo:EFX} terminates in polynomial time when the identical cost function has binary marginals. It is relevant to note further that Add-and-Fix will always terminate---albeit not necessarily in polynomial time---as long as the cost function is monotone. Since the proof of Theorem \ref{theorem:identical-costs} implies that the returned allocation is $\EFX$, under any identical monotone cost function, we obtain the following corollary.

\begin{corollary}
$\EFX$ chore allocations always exist under identical monotone cost functions. 
\end{corollary}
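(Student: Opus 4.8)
The plan is to run Add-and-Fix (Algorithm~\ref{algo:EFX}) on the given instance with the identical monotone cost function $c$, and to show that it still terminates and that the allocation it returns is $\EFX$. Two things must be argued: (i) the $\EFX$ invariant is maintained at the start of every iteration of the outer while-loop, and (ii) the outer while-loop terminates after finitely many iterations.

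For (i), I would observe that the inductive argument in the proof of Theorem~\ref{theorem:identical-costs} never uses the binary-marginals hypothesis --- it only uses monotonicity of $c$. Concretely, adding the chore $t$ to the lowest-cost bundle $A_\ell$ in Line~\ref{algEFX:t-addition} cannot decrease $c(A_\ell)$, and each removal of a chore $q$ in Line~\ref{algEFX:q-removal} cannot increase $c(A_\ell)$; these are the only two facts about $c$ invoked in that proof. Hence the same chain of inequalities, $c(A'_\ell - q) \le c(A'_s) \le c(A'_k)$ for all $q \in A'_\ell$ and $k \ne \ell$, together with $c(A'_\ell) > c(A'_s) \ge c(A'_k - q)$ for all $k \ne \ell$ and $q \in A'_k$, goes through verbatim for arbitrary monotone $c$, so the post-iteration allocation is $\EFX$.

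For (ii), the termination argument of Theorem~\ref{theorem:identical-costs} must be replaced, since it relied on the range of $c$ being $\{0,1,\dots,m\}$. Instead I would use the social cost $\SC(\Alloc) = \sum_{i \in [n]} c(A_i)$ as a monotone potential: in a ``no-violation'' iteration a single chore is added to one bundle, so $\SC$ weakly increases and $|U|$ drops by one; in a ``violation'' iteration the proof of Theorem~\ref{theorem:identical-costs} shows $c(A'_\ell) > c(A_s) \ge c(A_\ell)$ with all other bundles unchanged, so $\SC$ strictly increases. Thus $\SC(\Alloc)$ is nondecreasing across outer iterations; since $c$ takes only finitely many values on the finite domain $2^{[m]}$, $\SC$ is bounded above by $n \cdot \max_{S \subseteq [m]} c(S)$ and is therefore eventually constant. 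Once $\SC(\Alloc)$ has stabilized, no ``violation'' iteration can occur (it would strictly increase $\SC$), so every subsequent iteration is a ``no-violation'' iteration and decreases $|U|$ by one; as $0 \le |U| \le m$, the loop ends within $m$ further iterations. Combined with (i), the returned complete allocation is $\EFX$.

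The main obstacle is exactly point (ii): unlike the binary-marginals case there is no explicit bound on the number of iterations, so one cannot just transcribe the earlier counting argument --- the argument must instead exploit that $\SC$ is a finitely-valued, nondecreasing potential whose stabilization forces $|U|$ to strictly decrease. Everything else is a routine restatement of the proof of Theorem~\ref{theorem:identical-costs} with ``binary marginals'' weakened to ``monotone''.
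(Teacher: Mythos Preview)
Your proposal is correct and follows the same approach the paper takes: run Add-and-Fix, observe that the $\EFX$-invariant portion of the proof of Theorem~\ref{theorem:identical-costs} relies only on monotonicity, and then argue termination. The paper itself merely asserts that Add-and-Fix ``will always terminate---albeit not necessarily in polynomial time---as long as the cost function is monotone'' without spelling out why; your potential argument (social cost is nondecreasing, lies in a finite set of reals, hence eventually constant, after which every iteration is no-violation and $|U|$ drops) is exactly the right way to fill that gap, and is more detailed than what the paper provides.
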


This leads to an alternate proof for the existence of $\EFX$ allocations in chore division instances with identical cost functions. By extending a similar result for the goods setting by Plaut and Roughgarden~\cite{plaut2020almost} to chore division, Aleksandrov~\cite{aleksandrov2018almost} showed that for identical monotone cost functions any allocation satisfying the so called \textit{leximax-} notion is $\EFX$. However, as we show in Appendix~\ref{appendix:leximax-hardness}, computing a leximax- allocation is {\rm NP}-hard even when the cost functions have binary marginals. Bypassing the leximax- approach and the associated {\rm NP}-hardness, Add-and-Fix finds an $\EFX$ allocation in polynomial time for this setting.

\bibliographystyle{alpha}
\bibliography{references}

\newpage

\appendix
\section{Missing Proofs from Section \ref{sec:cost-min}} 
\label{appendix:missing-proofs-sc}

In this section, we first state and prove a useful proposition. Then, we establish Lemma \ref{lemma:submod-supermod}.
\begin{proposition}
\label{proposition:property-bin-super}
If $f:2^{[m]} \mapsto \mathbb{R}_{\geq 0}$ is a binary supermodular function and $f(S) > 0$ for some subset $S \subseteq [m]$, then there always exists an element $a \in S$ such that $f(S - a) = f(S) - 1$.
\end{proposition}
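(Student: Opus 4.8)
The plan is to argue directly from the binary-marginals and increasing-marginals properties, rather than through Lemma~\ref{lemma:submod-supermod}; this is deliberate, since Proposition~\ref{proposition:property-bin-super} is itself meant to be used in the proof of that lemma, so invoking it here would be circular. (If one were willing to use Lemma~\ref{lemma:submod-supermod}, there is a one-line proof: writing $r(T) = |T| - f(T)$, the hypothesis $f(S) > 0$ says $r(S) < |S|$, so $S$ is dependent in the associated matroid, and deleting any element of $S$ lying outside a maximum independent subset preserves the rank, which translates to $f(S-a) = f(S) - 1$.) Throughout I will use that $f(\emptyset) = 0$, as is standard for the cost functions considered here; this is genuinely needed, since the statement fails for, e.g., the constant function $f \equiv 1$. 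The guiding idea of the direct proof is to build $S$ up one element at a time, locate a step where the cost jumps by exactly $1$, and then use supermodularity to ``lift'' that unit jump to the top, so that it reappears when the corresponding element is added last on top of $S - a$.

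Concretely, the first step is to enumerate $S = \{a_1, \dots, a_k\}$ and define the prefixes $S_j = \{a_1, \dots, a_j\}$, with $S_0 = \emptyset$ and $S_k = S$. The binary-marginals property gives $f(S_j) - f(S_{j-1}) \in \{0,1\}$ for each $j$, and telescoping these differences shows their sum equals $f(S) > 0$; hence at least one prefix step has marginal exactly $1$. I would fix any index $j$ with $f(S_j) - f(S_{j-1}) = 1$ and designate $a := a_j$ as the element to be removed.

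The second step is to verify $f(S - a) = f(S) - 1$. The upper bound $f(S) - f(S - a) \le 1$ is immediate from binary marginals, so the content is the matching lower bound $f(S) - f(S - a) \ge 1$. For this I would apply the increasing-marginals inequality to the nested pair $S_{j-1} \subseteq S - a$ with the element $a$ (legitimate because $a \notin S - a$), which yields $f(S_{j-1} + a) - f(S_{j-1}) \le f(S) - f(S - a)$; the left-hand side is precisely the unit step $f(S_j) - f(S_{j-1}) = 1$ selected above, so $f(S) - f(S - a) \ge 1$, and the two bounds pinch the value to $1$. I do not expect a substantive obstacle here: the only point requiring care is the indexing bookkeeping — checking that the selected unit increment $S_{j-1} \to S_j$ really does sit below $S - a$ in the subset order, so that supermodularity is applicable — and once that is in place the argument closes immediately.
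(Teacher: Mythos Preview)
Your argument is correct and follows essentially the same route as the paper: both enumerate $S$, form the prefixes $S_j$, telescope to get $\sum_j \big(f(S_j)-f(S_{j-1})\big)=f(S)>0$, and then invoke supermodularity on $S_{j-1}\subseteq S-a_j$ to conclude. The only cosmetic difference is ordering --- the paper sums the supermodularity inequalities over all $j$ and then extracts an index, whereas you first extract an index with unit prefix-marginal and then apply supermodularity once. One side remark: your circularity worry is misplaced, since in the paper Lemma~\ref{lemma:submod-supermod} is proved directly without invoking Proposition~\ref{proposition:property-bin-super} (the proposition is used later, in the $\MMS$ analysis).
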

\begin{proof}
Since the function $f$ is supermodular, we know that the inequality 

\begin{align}
\label{prop1-eq1}
f(T+a) - f(T) \leq f(S) - f(S - a)
\end{align}
must hold for any $a \in S$ and subset $T \subseteq (S-a)$. Now, consider any ordering $s_1, s_2, \ldots, s_k$ of all the elements in the set $S$. Write $S_i \coloneq \{s_1, \ldots, s_i\}$ as the set of first $i$ elements in this ordering, for any index $i \in [k]$, and let $S_0 \coloneqq \emptyset$ be the empty set. Given this, we can instantiate equation (\ref{prop1-eq1}) with $T = S_{i-1}$ and $a = s_i$ to get $f(S) - f(S - s_i) \geq f(S_i) - f(S_{i-1})$ for all $i \in [k]$. Summing, we obtain 
\begin{align*}
\sum_{i \in [k]} \left( f(S) - f(S - s_i) \right) & \geq \sum_{i \in [k]} \left( f(S_i) - f(S_{i-1}) \right) \\ 
& = f(S_k) - f(S_0) \tag{telescoping sum} \\ 
& = f(S) \tag{since $S_k=S$ and $S_0 = \emptyset$}  \\
& > 0 \tag{Lemma assumption}
\end{align*}
Therefore, there must exist $s_i \in S$ such that $f(S) - f(S - s_i) >0$. The fact that function $f$ has binary marginals implies that $f(S) - f(S - s_i) = 1$ and completes the proof.
\end{proof}

\SubmodSupermod*
\begin{proof}
We will show that $(i)$ function $f$ has binary marginals iff $g$ has binary marginals and $(ii)$ $f$ is supermodular iff $g$ is submodular. Together $(i)$ and $(ii)$ establish the lemma.\footnote{Recall that a set function $g$ is a matroid-rank function iff it is binary submodular~\cite{schrijver2003combinatorial}.} \\

\noindent
\emph{Part $(i)$.} Since the function $f$ bears the marginals property, we have $f(S + a) - f(S) \in \{0, 1\}$, for all subsets $S \subseteq [m]$ and $a \in [m]$. Therefore, $g(S+a) - g(S) = |S+a| - f(S+a) - (|S| - f(S)) = 1 - (f(S+a) - f(S)) \in \{0,1\}$. Hence, the marginals of function $g$ are binary as well. The proof of the other direction (i.e., if $g$ is dichotomous then so is $f$) follows symmetrically, since $f(S) = |S| - g(S)$. \\

\noindent
\emph{Part $(ii)$.} By definition, the supermodularity of $f$ implies $f(S + a) - f(S) \leq f(T+a) -f(T)$, for all subsets $S \subseteq T$ and $ a \in [m] \setminus T$. By substituting $f(S) = |S| - g(S)$, we obtain $(|S+a| - g(S + a)) - (|S| - g(S)) \leq (|T+a| - g(T + a)) - (|T| - g(T))$. Simplifying the last inequality gives us $1 - \left( g(S+a) - g(S) \right) \leq 1 - \left( g(T+a) - g(T) \right)$. That is, $g(S+a) - g(S) \geq g(T+a) - g(T)$. Hence, $g$ is submodular. A symmetric argument shows that the reverse direction holds, i.e., if $g$ is submodular then $f$ is supermodular. 

The lemma stands proved. 
\end{proof}

\section{Missing Proof from Section \ref{section:mms-po}}
\label{appendix:proofs-positive}

The section restates and proves Lemma \ref{lemma:minimax-share}.

\minimaxshare*
\begin{proof}
Fix any agent $i \in [n]$ and consider a chore division instance $\mathcal{I}$ over the $m$ chores and $n$ agents each with identical cost function $c_i$. Lemma \ref{lemma:min-cost-characterization} implies that, for this instance $\mathcal{I}$, the minimum social cost $c_\mathcal{I}^* = m - r_{i \times n}([m])$. Here, we use the observation that if all the $n$ agents have the same cost function $c_i$,  then $\widehat{r}([m]) = r_{i \times n}([m])$. Let $(M_1, \ldots, M_n)$ denote a complete allocation that induces the minimax share $\tau_i$ (see equation (\ref{eq:defn-minimax})); in particular, $\tau_i = \max_j c_i(M_j)$. Using an averaging argument, we get 
\begin{align*}
\tau_i & \geq \left\lceil \frac{1}{n} \sum_{j \in [n]} c_i(M_j) \right\rceil \\ 
& \geq \left\lceil \frac{c^*_\mathcal{I}}{n} \right\rceil \tag{via the definition of $c^*_\mathcal{I}$} \\ 
& = \left\lceil \frac{m - r_{i \times n}([m])}{n} \right\rceil \numberthis \label{ineq:taui-lb}
\end{align*}

In fact, we can show that $\tau_i = \left\lceil \frac{m - r_{i \times n}([m])}{n} \right\rceil$. Towards this we first establish the claim below. 

\begin{claim}
\label{claim:interim}
In instance $\mathcal{I}$ there necessarily exists a complete allocation $\Alloc^* = (A^*_1, \ldots, A^*_n)$ with the properties that 
\begin{itemize}
\item Allocation $\Alloc^*$ minimizes the social cost, $\sum_{p \in [n]} c_i(A^*_p) = c^*_\mathcal{I}$, and 
\item Under $c_i$, the costs of any two bundles in $\Alloc^*$ differ by at most $1$, i.e., $c_i(A^*_j) \leq c_i(A^*_k) + 1$, for all indices $j, k \in [n]$. 
\end{itemize}
\end{claim}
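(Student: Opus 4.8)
The plan is to construct the allocation $\Alloc^*$ explicitly from the partial allocation produced by Subroutine~\ref{subroutine:cost-min-partial-alloc}, reusing the machinery already developed for Theorem~\ref{thm:lorenz}. First I would run Subroutine~\ref{subroutine:cost-min-partial-alloc} on the instance $\mathcal{I}$ (the $n$ agents all having cost $c_i$), obtaining a partial allocation $(B_1,\dots,B_n)$. By the proof of Lemma~\ref{lemma:completing-the-partial-alloc} (its properties (I) and (II)), each $c_i(B_j)=0$ and the number of unallocated chores is $|U| = m-\widehat r([m]) = m - r_{i\times n}([m]) = c^*_\mathcal{I}$, where the middle equality uses that all agents share the cost $c_i$, so the union matroid $\umat$ coincides with $\mat_{i\times n}$ and hence $\widehat r = r_{i\times n}$.

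Next, distribute $U$ as evenly as possible among the agents: with $h := (c^*_\mathcal{I} \bmod n)$, hand $\lceil c^*_\mathcal{I}/n\rceil$ of the unallocated chores to each of $h$ agents and $\lfloor c^*_\mathcal{I}/n\rfloor$ to each of the remaining $n-h$ agents; write $\alpha_j$ for the number of chores given to agent $j$, so that $\sum_{j\in[n]} \alpha_j = |U| = c^*_\mathcal{I}$ and $\alpha_j \in \{\lfloor c^*_\mathcal{I}/n\rfloor, \lceil c^*_\mathcal{I}/n\rceil\}$. Let $\Alloc^* = (A^*_1,\dots,A^*_n)$ be the resulting complete allocation. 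Since $A^*_j \supseteq B_j$ for all $j$, Lemma~\ref{lemma:completing-the-partial-alloc} immediately gives $\sum_{j\in[n]} c_i(A^*_j) = c^*_\mathcal{I}$, which is the first required property. For the second property, the binary-marginals property of $c_i$ gives $c_i(A^*_j) \le c_i(B_j) + \alpha_j = \alpha_j$ for every $j$; summing over $j$ and using $\sum_j c_i(A^*_j) = c^*_\mathcal{I} = \sum_j \alpha_j$ forces $c_i(A^*_j) = \alpha_j$ for all $j$ (this is exactly the cost-count argument from part (a) of the proof of Theorem~\ref{thm:lorenz}). Hence $c_i(A^*_j) \in \{\lfloor c^*_\mathcal{I}/n\rfloor, \lceil c^*_\mathcal{I}/n\rceil\}$ for every $j$, so any two bundle costs in $\Alloc^*$ differ by at most $1$, establishing the claim.

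I do not expect a serious obstacle here: the statement is essentially a repackaging of Lemma~\ref{lemma:completing-the-partial-alloc} together with the cost-count argument of Theorem~\ref{thm:lorenz}. The only points that need care are the identity $\widehat r([m]) = r_{i\times n}([m])$ for the identical-cost instance and the fact that an arbitrary completion of $(B_1,\dots,B_n)$ remains social-cost minimizing (both already handled by the cited results). As a self-contained alternative, one could instead start from any social-cost-minimizing allocation $\Alloc = (A_1,\dots,A_n)$ of $\mathcal{I}$ and perform local exchanges: whenever $c_i(A_j) \ge c_i(A_k) + 2$ for some pair $j,k$, use Proposition~\ref{proposition:property-bin-super} to find $a \in A_j$ with $c_i(A_j - a) = c_i(A_j) - 1$ and move $a$ from $A_j$ to $A_k$, where (by binary marginals) the cost of $A_k$ grows by at most $1$; this keeps the social cost at its minimum while strictly decreasing the potential $\sum_{j\in[n]} c_i(A_j)^2$, so the process terminates at an allocation with the desired balance. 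In that route, the one step to get right is verifying that the potential is monotone and strictly decreasing on each exchange, which guarantees termination.
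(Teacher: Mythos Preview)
Your primary construction is correct and takes a different route from the paper. The paper proves the claim via your ``self-contained alternative'': it starts from an arbitrary social-cost-minimizing allocation, and whenever two bundles differ in cost by at least $2$, it invokes Proposition~\ref{proposition:property-bin-super} to find a chore whose removal drops the larger cost by exactly $1$, transfers that chore, and tracks the potential $\sum_{p} c_i(A_p)^2$ to guarantee termination. Your primary argument instead builds $\Alloc^*$ directly from the output of Subroutine~\ref{subroutine:cost-min-partial-alloc} and recycles the cost-count argument of Theorem~\ref{thm:lorenz}; this is cleaner in that it avoids the iterative exchange and the potential-function bookkeeping, and it makes the final cost profile $\bigl(\lceil c^*_\mathcal{I}/n\rceil,\ldots,\lfloor c^*_\mathcal{I}/n\rfloor\bigr)$ explicit. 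The paper's route, on the other hand, is more self-contained (it needs only Proposition~\ref{proposition:property-bin-super} and binary marginals, not the matroid-union machinery) and yields the polynomial-iteration bound used later in the lemma. Both are fine; your primary approach just leans more heavily on results already proved elsewhere in the paper, which is entirely appropriate here.
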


\begin{proof}
Consider any social-cost-minimizing allocation $\Alloc = (A_1, \ldots, A_n)$ that violates the second property, i.e., for $\Alloc$ there exist two indices $j, k \in [n]$ such that $c_i(A_j) > c_i(A_k) + 1$. Then, we can update the allocation $\Alloc$ to obtain another (complete) allocation $\Alloc'= (A'_1, \ldots, A'_n)$ such that $\sum_{p \in [n]} c_i(A'_p)^2 <  \sum_{p \in [n]} c_i(A_p)^2$ and $\Alloc'$ continues to be social-cost minimizing. In particular, using the fact that $c_i(A_j) > c_i(A_k) + 1 \geq 0$, we can invoke Proposition \ref{proposition:property-bin-super} (stated and proved in Appendix \ref{appendix:missing-proofs-sc}) to infer the existence of a chore $t \in A_j$ such that $c_i(A_j - t) = c_i(A_j) - 1$. We transfer the chore $t$ from $A_j$ to $A_k$ and obtain the modified allocation $\Alloc' = (A'_1, \ldots, A'_n)$, i.e., we set $A'_j = A_j - t$ and $A'_k = A_k + t$ along with $A'_p = A_p$, for all other indices $p \in [n] \setminus \{j, k\}$. We show that allocation $\Alloc'$ continues to minimize social cost and satisfies $\sum_{p \in [n]} c_i(A'_p)^2 <  \sum_{p \in [n]} c_i(A_p)^2$. 

To show that $\Alloc'$ is social-cost-minimizing, note that $c_i(A'_j) = c_i(A_j) - 1$ and $c_i(A'_k) \leq c_i(A_k) + 1$; recall that the cost function $c_i$ has binary marginals. Furthermore, $c_i(A'_p) = c_i(A_p)$, for all other indices $p \in [n] \setminus \{j, k\}$. These observation imply that the social cost of allocation $\Alloc'$ is at most the social cost of $\Alloc$. Since $\Alloc$ is a social-cost-minimizing allocation, we get that the updated allocation $\Alloc'$ is also social-cost-minimizing. Next, note that the difference $\sum_{p \in [n]} c_i(A'_p)^2 - \sum_{p \in [n]} c_i(A_p)^2$ is at most $\big( (c_i(A_j) - 1)^2 + (c_i(A_k) + 1)^2 \big) - \big( c_i(A_j)^2 + c_i(A_k)^2 \big) = 2 (c_i(A_k) + 1 - c_i(A_j) ) < 0$; the last inequality follows from the claim assumption that $c_i(A_k) + 1 < c_i(A_j)$. 

If we iteratively keep on performing the above-mentioned update, then this process must terminate. In fact, it does so in polynomially many iterations, since the potential $\sum_{p \in [n]} c_i(A_p)^2$ is bounded from above by $nm^2$ and below by $0$; note that $\sum_{p \in [n]} c_i(A_p)^2 \geq 0$ and each update decreases the potential by at least one. Furthermore, when the process terminates, we obtain the desired allocation $\Alloc^* = (A^*_1, \ldots, A^*_n)$. This follows from the fact that the update process maintained social-cost-minimization as an invariant, and the process stopped only when the costs, under $c_i$ of any two bundles differed by at most $1$, i.e., we had $c_i(A^*_j) \leq c_i(A^*_k) + 1$, for all indices $j, k \in [n]$. The claim stands proved. 
\end{proof}

As mentioned previously, we will use Claim \ref{claim:interim} to show that $\tau_i = \left\lceil \frac{m - r_{i \times n}([m])}{n} \right\rceil$. The two properties satisfied by the allocation $\Alloc^*$ (in Claim \ref{claim:interim}) will imply that the highest cost among the bundles $\max_{p \in [n]} c_i(A^*_p) = \left\lceil \frac{c^*_\mathcal{I}}{n} \right\rceil$. Considering the specific complete allocation $\Alloc^*$ and definition of $\tau_i$ gives us 
\begin{align}
\tau_i \leq  \left\lceil \frac{c^*_\mathcal{I}}{n} \right\rceil  = \left\lceil \frac{m - r_{i \times n}([m])}{n} \right\rceil \label{ineq:taui-ub}
\end{align}
The complementary inequalities (\ref{ineq:taui-lb}) and (\ref{ineq:taui-ub}) lead to the stated bound $\tau_i = \left\lceil \frac{m - r_{i \times n}([m]) }{n} \right\rceil$. 

Finally, note that, using the matroid union algorithm, we can compute $r_{i \times n}([m])$ in polynomial time. Hence, the minimax shares $\tau_i$s are polynomial-time computable in the current context. This completes the proof of the lemma. 
\end{proof}

\section{{\rm NP}-Hardness of Leximax- for Identical Costs with Binary Marginals}
\label{appendix:leximax-hardness}

Given a chore allocation $\Alloc = (A_1, \ldots, A_n)$, we can define a \emph{cost-size} tuple, $(c(A_i), |A_i|)$, for each agent $i \in [n]$. Furthermore, we can define a total order $>_{lex}$ over the cost-size tuples as follows: $(c(A_i), |A_i|) >_{lex} (c(A_j), |A_j|)$ iff $c(A_i) > c(A_j)$, or $c(A_i) = c(A_j)$ and $|A_i| > |A_j|$. An allocation $\Alloc^*$ is leximax- iff it maximizes the minimum cost-size tuple among the agents (with respect to the total order $>_{lex}$), subject to that, maximizes the second minimum cost-size tuple among the agents, and so on.\footnote{The definition of leximax- is the exact analog of leximin++. The term leximin++ has been used in the context of good division ~\cite{plaut2020almost}, whereas the leximax- construct has been used in chore division ~\cite{aleksandrov2018almost}.}

\begin{theorem}
Computing leximax- allocations is {\rm NP}-Hard even under identical cost functions with binary marginals. 
\end{theorem}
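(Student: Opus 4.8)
The plan is to give a polynomial-time many-one reduction from \textsc{$3$-Coloring} (which is \textrm{NP}-complete) to the problem of computing a leximax- allocation, where the produced instance has three agents, a single identical cost function, and that cost function is $\{0,1\}$-valued and hence trivially binary-marginal.

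The reduction: given a graph $H=(V,E)$, output the chore instance $\langle [3],\, V,\, \{c\}\rangle$ whose chores are the vertices of $H$, with three identical agents and cost function $c:2^{V}\to\{0,1\}$ given by $c(S)=0$ if $S$ is an independent set of $H$ and $c(S)=1$ otherwise (equivalently, $c(S)=1$ iff $S$ contains both endpoints of some edge of $H$). First I would check admissibility: $c$ is monotone, since any superset of a non-independent set is non-independent, so---being monotone and $\{0,1\}$-valued---every marginal $c(S+a)-c(S)$ lies in $\{0,1\}$; thus $c$ has binary marginals. Moreover $c(\emptyset)=0$ and $c$ is evaluable in polynomial time directly from the explicit graph $H$, so the reduction genuinely stays within the ``identical cost, binary marginals'' class and does not smuggle in an \textrm{NP}-hard oracle.

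The core is the equivalence: \emph{$H$ is $3$-colorable if and only if some complete allocation $(A_1,A_2,A_3)$ has $c(A_i)=0$ for every $i$}. For ($\Rightarrow$), a proper $3$-coloring partitions $V$ into three independent sets, and assigning colour class $i$ to agent $i$ gives such an allocation. For ($\Leftarrow$), the three bundles of such an allocation are independent sets partitioning $V$, i.e.\ a proper $3$-coloring (empty bundles simply correspond to unused colours). Now recall that a leximax- allocation in particular minimizes the largest cost $\max_{i}c(A_i)$ over all complete allocations, with ties broken via the remaining sorted cost--size tuples (these tiebreaks play no role here). Since $0$ is the smallest possible cost for each agent, the all-zero profile $(0,0,0)$ is lexicographically optimal and is realized by a leximax- allocation exactly when it is feasible; combined with the equivalence, a leximax- allocation has cost profile $(0,0,0)$ iff $H$ is $3$-colorable, and satisfies $\max_i c(A_i)=1$ otherwise.

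Finally, suppose for contradiction that a polynomial-time algorithm $\mathcal{A}$ outputs a leximax- allocation. Given $H$, build the instance above, run $\mathcal{A}$ to obtain $(A_1,A_2,A_3)$, compute $\max_i c(A_i)$ in polynomial time, and declare ``$H$ is $3$-colorable'' iff this value equals $0$; by the previous step this is a correct and efficient decision procedure for \textsc{$3$-Coloring}, forcing $\textrm{P}=\textrm{NP}$. Hence computing leximax- allocations is \textrm{NP}-hard even under identical binary-marginal costs. The step I expect to need the most care is the \emph{design of the cost function}: it must simultaneously (i) make ``independent set of $H$'' precisely the zero-cost condition, (ii) be polynomial-time evaluable---which rules out natural candidates such as $|S|-\alpha(H[S])$ that covertly require computing a maximum independent set---and (iii) have binary marginals; the $\{0,1\}$-valued non-independence indicator satisfies all three at once. (Alternatively, to encode the optimum through a wider cost range, one can reduce from edge-$3$-colorability of cubic graphs and use $c(S)\coloneqq|S|-\mu(S)$, where $\mu(S)$ is the size of a maximum matching contained in $S$: this $c$ is monotone, binary-marginal, polynomial-time computable, and vanishes exactly on matchings.)
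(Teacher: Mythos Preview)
Your reduction rests on the claim that ``a leximax- allocation in particular minimizes the largest cost $\max_i c(A_i)$.'' That is not the notion the paper uses: leximax- \emph{maximizes the minimum} cost--size tuple (then the second minimum, and so on). The paper states this explicitly in the definition and reiterates it inside its own proof (``all leximax- allocations (that maximize the minimum cost-size tuple \ldots)''), and its reduction is built around precisely this direction: an Exact-$3$-Cover exists iff the leximax- allocation can push every agent \emph{up} to the maximum attainable cost of~$3$.

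Under the correct definition your $3$-coloring reduction collapses. With your indicator cost $c(S)\in\{0,1\}$, a leximax- allocation first tries to make \emph{every} bundle non-independent (cost~$1$), and this is achievable iff $V$ can be partitioned into three sets each spanning an edge---equivalently, iff $H$ has a matching of size~$3$, which is decidable in polynomial time and unrelated to $3$-colorability. The same objection breaks your alternative $c(S)=|S|-\mu(S)$: you need the leximax- optimum to drive all costs to~$0$, but in fact it drives them up. To repair the argument you must design a binary-marginal cost whose \emph{maximum} attainable per-agent value encodes the hard decision; the paper does exactly this, reducing from Exact $3$-Cover with $c(S)=\max_{F\in\mathcal F}|S\cap F|$, so that every agent reaching cost~$3$ forces each bundle to contain some $F\in\mathcal F$.
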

\begin{proof}
To prove this, we reduce the well-known NP-Hard problem of exact $3$-cover to the problem of computing leximax- allocations for identical cost functions with binary marginals. Recall that, an instance of exact $3$-cover is a tuple $(X, \mathcal{F})$, where $X$ is a universe of elements with size $|X| = 3q$, for some integer $q \geq 1$, and $\mathcal{F} \subseteq 2^X$ is a collection of subsets of $X$. Each given subset $F \in \mathcal{F}$ is of three, i.e., $|F| = 3$. In the exact $3$-cover problem, for any given the instance $(X, \mathcal{F})$, the goal is to decide whether there exists a sub-collection $\mathcal{F}' \subseteq \mathcal{F}$ with the properties that $(i)$ all elements of $X$ are covered by $\mathcal{F}'$, i.e., $\cup_{F \in \mathcal{F}'} F = X$, and $(ii)$ no element is covered twice, i.e., $F \cap G = \emptyset$ for all subsets $F, G \in \mathcal{F}'$, with $F \neq G$. Equivalently, the goal here is to determine if there exists a size-$q$ collection $\mathcal{F}' \subseteq \mathcal{F}$ that exactly covers $X$. 

Given an exact $3$-cover instance $(X,\mathcal{F})$, with $|X| = 3q$, we will, in polynomial time, construct a chore division instance $\mathcal{I}$ in which the agents have identical cost functions (with binary marginals). Specifically, we construct an instance with $n=q$ agents and a chore corresponding to each element of $X$, i.e., $X$ denotes the set of chores. The identical cost function of the agents $c: 2^X \mapsto \mathbb{R}_{\geq 0}$ is defined as $c(S) = \max_{F \in \mathcal{F}} \ |S \cap F|$, for all subsets $S \subseteq X$. Note that the cost function $c$ has binary marginals, and $c(S) \leq 3$ for all $S \subseteq X$. 

To complete the proof, we will show that the instance $(X, \mathcal{F})$ has an exact $3$-cover iff, in the the constructed chore division instance $\mathcal{I}$, any leximax- allocation $\Alloc = (A_1, A_2, \allowbreak \ldots, A_n)$ satisfies $c(A_i) = 3$, for all $i \in [n]$. For the forward direction, consider the case in which the given cover instance admits an exact $3$-cover $\mathcal{F}' \subseteq \mathcal{F}$. Write $\mathcal{F}' = \{S_1, S_2, \ldots, S_q\}$ and define allocation $\mathcal{S} = (S_1, S_2, \ldots, S_q)$. Note that, by definition, the cost of each agent $i \in [n]$ will be $c(S_i) = \max_{S \in \mathcal{F}} |S_i \cap S| = |S_i \cap S_i| = |S_i| = 3$; additionally, $c(S) \leq 3$ for all $S \subseteq X$. Since in allocation $\Alloc$ the cost of each agent is $3$, which is the maximum possible cost for the cost function $c$, all leximax- allocations (that maximize the minimum cost-size tuple, then the second minimum cost-size tuple and so on) must also allocate bundles of cost $3$ to all agents. That is, for any leximax- allocation $\Alloc = (A_1, A_2, \ldots, A_n)$ of the chore division instance $\mathcal{I}$ we have $c(A_i) = 3$ for all agents $i \in [n]$. 

For proving the reverse direction, we will show that if $\Alloc = (A_1, A_2, \ldots, A_n)$ is a leximax- allocation such that $c(A_i) = 3$, for all $i \in [n]$, then the given instance $(X, \mathcal{F})$ admits an exact cover. Towards this, note that by definition of cost function $c$, if for all agents $i \in [n]$, $c(A_i) = \max_{F \in \mathcal{F}} |A_i \cap F| = 3$ then there exists subset $F_i \in \mathcal{F}$ such that $F_i \subseteq A_i$. Since, the bundles of all agents are disjoint, the subsets  $F_i \subseteq A_i$ (which satisfy $F_i \in \mathcal{F}$) are also disjoint. However, this implies that, together, the subsets $\{F_1, F_2, \ldots, F_q\}$ cover all the elements of $X$: since $|F_i| = 3$ we have $|\cup_{i \in [n]} F_i| = \sum_{i \in [n]} |F_i| = 3n = 3q = |X|$. Additionally, no element is covered twice since $F_i \cap F_j = \emptyset$, for $i \neq j$. That is, $\{F_1, F_2, \ldots, F_q\}$ forms an exact cover for the given cover instance. This completes the reduction and establishes that computing leximax- allocations for instances with binary marginals is as hard as solving exact $3$-cover, i.e., leximax- computation is $\mathrm{NP}$-hard.
\end{proof}

\end{document}